\documentclass[5p,times]{elsarticle}
\usepackage{ecrc}


\volume{00}

\firstpage{1}

\journalname{Procedia Computer Science}

\runauth{}


\jid{procs}

\jnltitlelogo{Procedia Computer Science}

\CopyrightLine{2011}{Published by Elsevier Ltd.}




\usepackage{amssymb}




 \biboptions{numbers,sort&compress}

\usepackage[figuresright]{rotating}



\usepackage{xspace}
\usepackage{multirow}
\usepackage{makecell} 
\usepackage{amsmath,amsfonts,amssymb}
\usepackage{amsthm}
\usepackage[ruled,vlined,linesnumbered]{algorithm2e} 
\usepackage{bm}
\usepackage{tabularx,booktabs}
\usepackage{mathtools}
\usepackage{url}
\usepackage{tikz}
\usepackage{threeparttable}

\def\CSP{C\&S'\xspace}
\def\CS{C\&S\xspace}
\def\DEF{CPX\xspace}
\def\CSES{C\&S\xspace}
\def\VMCP{VMCP\xspace}
\def\VMCPs{VMCPs\xspace}

\def\NVM{\#VM\xspace}

\def\TT{T\xspace}

\def\NC{\#CONS\xspace}
\def\NV{\#VARS\xspace}
\def\Solved{\#S\xspace}
\def\VM{VM\xspace}
\def\VMs{VMs\xspace}
\def\CPX{CPX\xspace}

\def\CASSIGN{c^{\rm alloc}_{i,k}}
\def\CRUN{c^{\rm run}_{k}}
\def\CMIG{c^{\rm mig}_{i,k}}

\def\UIR{u_{i,r}}
\def\SKR{s_{k,r}}
\def\DI{d_{i}}
\def\DINEW{d^{\rm new}_{i}}
\def\NIK{n_{i,k}}
\def\L{\text{\ell}}
\def\MK{m_{k}}
\def\ISET{\mathcal{I}}
\def\KSET{\mathcal{K}}
\def\RSET{\mathcal{R}}
\def\DSET{\mathcal{D}}
\def\XIK{x_{i,k}}
\def\YK{y_{k}}
\def\ZIK{z_{i,k}}
\def\XIKNEW{x^{\rm new}_{i,k}}

\newcommand{\MIP}{{MILP}\xspace}
\newcommand{\LP}{{LP}\xspace}
\newcommand{\solver}[1]{\textsc{#1}\xspace}
\newcommand{\cplex}{\solver{CPLEX}}
\newcommand{\gurobi}{\solver{Gurobi}}
\newcommand{\scip}{\solver{SCIP}}

\DeclareMathOperator{\conv}{conv}
\DeclareMathOperator*{\argmax}{arg\,max}

\usepackage{hyperref}

\newtheorem{proposition}{Proposition}

\begin{document}
\begin{sloppypar}  
\begin{frontmatter}



\title{A Cut-and-solve Algorithm for Virtual Machine Consolidation Problem}


\author[bupt]{Jiang-Yao Luo}
\ead{luoshui3000@bupt.edu.cn}
\author[amss]{Liang Chen}
\ead{chenliang@lsec.cc.ac.cn}
\author[bit]{Wei-Kun Chen}
\ead{chenweikun@bit.edu.cn}
\author[bupt]{Jian-Hua Yuan\corref{cor1}}
\ead{jianhuayuan@bupt.edu.cn}
\author[amss]{Yu-Hong Dai}
\ead{dyh@lsec.cc.ac.cn}

\cortext[cor1]{Corresponding author}

\address[bupt]{School of Science, Beijing University of Posts and Telecommunications, Beijing 100876, China}
\address[amss]{School of Mathematical Sciences, University of Chinese Academy of Sciences, Beijing, China}
\address[bit]{School of Mathematics and Statistics/Beijing Key Laboratory on MCAACI, Beijing Institute of Technology, Beijing 100081, China}

\begin{abstract}
	The virtual machine consolidation problem (\VMCP) attempts to determine which servers to be activated,
	how to allocate virtual machines (\VMs) to the activated servers, and how to migrate \VMs among servers such that the summation of activated, allocation, and migration costs is minimized subject to the resource constraints of the servers and other practical constraints.
	In this paper, we first propose a new mixed integer linear programming (\MIP) formulation for the \VMCP. 
	We show that compared with existing formulations, the proposed formulation is much more compact in terms of smaller numbers of variables or constraints, 
	which makes it suitable for solving large-scale problems.
	We then develop a cut-and-solve (\CS) algorithm, a tree search algorithm to efficiently solve the \VMCP to optimality.
	The proposed \CS algorithm is based on a novel relaxation of the \VMCP that provides a stronger lower bound than the natural continuous relaxation of the \VMCP, making a smaller search tree.
	By extensive computational experiments, we show that (i) the proposed formulation significantly outperforms existing formulations in terms of solution efficiency; and (ii) compared with standard \MIP solvers, the proposed \CS algorithm is much more efficient.

\end{abstract}

\begin{keyword}
Cut-and-solve \sep Cutting plane \sep Exact algorithm \sep Mixed integer linear programming \sep Virtual machine consolidation
\end{keyword}

\end{frontmatter}




\section{Introduction}
\label{sect:introduction}
Nowadays, cloud computing provides a great flexibility and availability of computing resources to customers and has become more and more popular in many industries including the manufacturing industry \cite{xu2012cloud}, conference management systems \cite{ryan2011cloud}, and  E-commerce \cite{wang2013influences}.
The reason behind this success is that cloud service providers can provide customers with reliable, inexpensive, customized, and elastically priced computing resources without requiring customers to host them at a dedicated place.
In particular, various applications requested by different customers can be instantiated inside virtual machines (\VMs) and flexibly deployed to any server in any data center of the cloud \cite{Barham2003}. 
However, as \VMs change dynamically and run over a shared common cloud infrastructure, it is crucial to (re)allocate cloud resources (by migrating existing \VMs among different servers and/or mapping new \VMs into appropriate servers) to meet diverse application requirements while minimizing the operational costs of the service providers.
The above problem is called the \emph{virtual machine consolidation problem} (\VMCP) in the literature, which
determines the activation state of all servers and the (re)allocation of all \VMs in such a way that the predefined cost function
(server activation, \VM allocation, and migration costs) is minimized subject to resource constraints of the servers and other practical constraints.

The \VMCP is strongly NP-hard, as it includes the \emph{bin packing problem} (BPP) \cite{speitkamp2010mathematical}.
Therefore, there is no polynomial-time algorithm to solve the \VMCP to optimality unless P = NP.
As a result, most existing works investigated heuristic algorithms for solving the \VMCP.
In particular, the \emph{first-fit decreasing} and \emph{best-fit decreasing} algorithms, first investigated for BPPs, are used to solve \VMCPs; see \cite{beloglazov2012energy,speitkamp2010mathematical}.
Reference \cite{speitkamp2010mathematical} first formulated the \VMCP as a mixed integer linear programming (\MIP) problem and then proposed a linear programming (\LP) relaxation based heuristic algorithm.
This heuristic algorithm solves the LP relaxation of the \MIP problem, fixes the variables taking integral values, and tries to find a solution by solving the reduced \MIP problem. 
Reference \cite{goudarzi2012sla} proposed a heuristic algorithm based on the convex
optimization method and dynamic programming. 
Several metaheuristic algorithms were also developed to solve the \VMCP, 
including genetic algorithm \cite{wu2016energy,sharma2016multi,he2014developing}, simulated annealing \cite{marotta2015simulated}, 
colony optimization \cite{farahnakian2014using,jiang2017dataabc}, and evolution algorithm \cite{li2020energy}. 
However, the above heuristic algorithms cannot guarantee to find an optimal solution for the \VMCP.
Indeed, as found in \cite{mazumdar2017power,speitkamp2010mathematical,goudarzi2012sla},  the solutions found by the heuristic algorithms are 6\% to 49\%  far from the optimal solutions. 
Therefore, determining an optimal solution for the \VMCP is highly needed.

Usually, the \VMCP can be formulated as an \MIP problem,
which allows us to leverage state-of-the-art \MIP solvers such as \gurobi \cite{optimization2018gurobi} and \scip \cite{achterberg2009scip} to solve it to optimality.
In particular, in the formulations in \cite{speitkamp2010mathematical,marotta2015simulated,beloglazov2012optimal,ferreto2011server,beloglazov2012energy,laili2018iterative,wolke2015more,wu2016energy,mann2015allocation,helali2021survey}, the authors used binary variables to denote whether a given \VM is mapped into a given server, and presented the constraints and the objective function based on these binary variables.
One weakness of these formulations is that the problem size grows linearly with the number of \VMs.
When the number of \VMs is large, these formulations are difficult to be solved by standard \MIP solvers.
In practice, the requested loads of many \VMs are identical, meaning that the number of \VM types is relatively small even when the number of \VMs is huge  \cite{AlibabaCluster}.
Reference \cite{mazumdar2017power}  took this observation into account and proposed a formulation using a family of integer variables, which represents the number of \VMs of a given type on a given server. 
As a result, the problem size of this formulation grows linearly with the number of \VM types but not the number of \VMs.
However, in order to model the migration process, the authors used a family of 3-index integer variables indicating the number of \VMs of a given type migrated from one server to another server.
Due to this family of variables, the problem size grows quadratically with the number of servers, making it unrealistic to solve this formulation by standard \MIP solvers within a reasonable time limit, especially when the number of servers is large. 

To summarize, the existing formulations for the \VMCP suffer from a large problem size when the number of \VMs is large or the number of servers is large.
This fact makes it difficult to (i) employ a standard \MIP solver to solve the \VMCP within a reasonable time limit, and (ii) develop an efficient customized exact algorithm for the \VMCP (as such algorithms are usually based on a formulation with a small problem size).  
The motivation of this work is to fill this research gap. In particular,
\begin{itemize}
	\item[1)] We present new \MIP formulations for \VMCPs, which minimizes the summation of server activation, \VM allocation, and migration costs subject to resource constraints and other practical constraints.
	The proposed new formulation is much more compact than
	the existing formulations in \cite{speitkamp2010mathematical} 
	and \cite{mazumdar2017power} in terms of the smaller number of variables or constraints. 
	\item[2)] We develop a cut-and-solve algorithm (called \CSES) 
	to solve \VMCPs to optimality based on the new formulations.
	The proposed \CS algorithm is based on a novel relaxation of the \VMCP that provides a stronger lower bound than the natural continuous relaxation of the \VMCP, making a smaller search tree.
\end{itemize}
Extensive computational results demonstrate that (i) the proposed formulation significantly outperforms existing formulations in terms of solution efficiency; and (ii) compared with standard \MIP solvers, the proposed \CS algorithm is much more efficient.


The paper is organized as follows. 
Section \ref{sect:problem_formualtion} introduces the novel \MIP formulations for \VMCP and compares them with the formulations in \cite{speitkamp2010mathematical} and \cite{mazumdar2017power}.
Section \ref{sect:solution_methodology} describes the proposed \CSES algorithm to solve \VMCPs.
Section \ref{sect:numerical_results} presents the computational results.
Finally, Section \ref{sect:conclusion_remarks} draws some concluding remarks. 
\section{Virtual machine consolidation problems}
\label{sect:problem_formualtion}
The \VMCP attempts to determine which servers to be activated,
how to allocate \VMs to the activated servers, and how to migrate \VMs among servers such that the sum of server activation, \VM allocation, and migration costs is minimized subject to the resource constraints of the servers and other practical constraints.
In this section, we first present an \MIP formulation for a basic version of the \VMCP (in which only the resource constraints of the servers are considered). 
Then, we present a variant of the \VMCP that considers other practical constraints. 
Finally, we show the advantage of the proposed formulation by comparing it with those in Speitkamp and Bichler \cite{speitkamp2010mathematical} 
and Mazumdar and Pranzo \cite{mazumdar2017power}.
\subsection{The basic virtual machine consolidation problem}
\label{sect:vm_consolidation_problem}
\begin{table}[htbp]
	\centering
	\setlength{\tabcolsep}{3pt}
	\renewcommand{\arraystretch}{1.1}
	\caption{Summary of parameters and variables.}
	\label{sect:vmcp}
	\begin{tabular}{|c|l|}
		\hline
		\multicolumn{2}{|c|}{Parameters} \\
		\hline
		$\UIR$& \makecell[lt]{units of resource $r$ requested by a \VM of type $i$}\\
		$\SKR$& \makecell[lt]{units of resource $r$ provided by server $s$}\\
		$\CASSIGN$& \makecell[lt]{cost of allocating a \VM of type $i$ to server $k$}\\
		$\CRUN$& \makecell[lt]{activation cost of server $k$}\\
		$\CMIG$& \makecell[lt]{cost of migrating a \VM of type $i$ to server $k$}\\
		$\DI$& \makecell[lt]{number of \VMs of type $i$ that need to be allocated}\\
		$\NIK$& \makecell[lt]{number of \VMs of type $i$ that (currently) be allocated\\ to server $k$}\\
		$\L$& \makecell[lt]{maximum number of allowed migrations}\\
		$\MK$& \makecell[lt]{maximum number of \VMs allocated to server $k$}\\
		$\DINEW$& \makecell[lt]{number of new incoming \VMs of type $i$ that need to\\ be allocated}\\
		\hline
		\multicolumn{2}{|c|}{Variables} \\
		\hline
		$\XIK$& \makecell[lt]{integer variable representing the number of \VMs of \\ type $i$ allocated to server $k$}\\
		$\YK$& \makecell[lt]{binary variable indicating whether or not server $k$ is \\activated}\\
		$\ZIK$& \makecell[lt]{integer variable representing the number of \VMs of \\ type $i$ migrated to server $k$}\\
		$\XIKNEW$& \makecell[lt]{integer variable representing the number of new \\ incoming \VMs of  type $i$ allocated to server $k$}\\
		\hline
	\end{tabular}
\end{table}

Let $\KSET$, $\ISET$, and $\RSET$ denote the set of the servers,
the set of types of \VMs that need to be allocated to the servers,
and the set of resources (e.g., CPU, RAM, and Bandwidth \cite{speitkamp2010mathematical}) of the servers, respectively.
Each server $k$ can provide $\SKR$ units of resource $r$ and 
each \VM of type $i$ requests $\UIR$ units of resource $r$.
Before the \VM consolidation, there are $\NIK$ \VMs of type $i$ that are currently allocated to server $k$.
For notations purpose, we denote $\DI = \sum_{k \in \KSET} \NIK$ for all $i \in \ISET$.
We introduce integer variable $\XIK$ to represent the number of \VMs of type $i$ allocated to server $k$ (after the \VM consolidation), 
binary variable $\YK$ to indicate whether or not server $k$ is activated,
and integer variable $\ZIK$ to represent the number of \VMs of type $i$ migrated to server $k$.
Then the mathematical formulation of the \VMCP can be written as:
\begin{subequations}
\label{eq:mathforms}
\setlength{\abovedisplayskip}{5pt}
\setlength{\belowdisplayskip}{5pt}
\begin{align} 
\!\!\!\! \mathop{\mathrm{min}} \ 
&\sum_{k \in \KSET} \CRUN \YK 
+\sum_{i \in \ISET}\sum_{k \in \KSET} \CASSIGN \NIK  + \sum\limits_{i \in \ISET}\sum\limits_{k \in \KSET} \CMIG \ZIK  \label{eq:obj}\\
\!\!\!\! \text{s.t.} ~~& \sum\limits_{i \in \ISET} \UIR \XIK \leq \SKR \YK,~\forall~k \in \KSET,~  
\forall~r \in \RSET,  \label{eq:resource}   \\
& 	\sum\limits_{k \in \KSET} \XIK = \DI, ~ \forall~i \in \ISET,   \label{eq:allocation_general}  \\
&  (\XIK -  \NIK)^{+} = \ZIK , ~ \forall~i \in \ISET,~\forall~k \in \KSET,  \label{eq:migration} \\
& \XIK,~\ZIK \in \mathbb{Z_+},~ \XIK \leq v_{i,k},~\forall~i \in \ISET,~\forall~k \in \KSET, \label{eq:varX}\\
&  \YK \in \{0,1\},~\forall~k \in \KSET. \label{eq:varY}
\end{align}
\end{subequations}

Constraint \eqref{eq:resource} ensures that the total workload of \VMs allocated to each server does not exceed any of its resource capacity.
Constraint \eqref{eq:allocation_general} enforces that all \VMs of each type have to be allocated to servers. 
Constraint \eqref{eq:migration} relates variables $\XIK$ and $\ZIK$. 
More specifically, it enforces that if the number of \VMs of type $i$ allocated to server $k$ after the \VM consolidation, $\XIK$, is larger than that before the \VM consolidation, $\NIK$,
then the number of \VMs of type $i$ migrated to server $k$, $\ZIK$, must be equal to $\XIK-\NIK$; otherwise, it is equal to zero.
Finally, constraints \eqref{eq:varX} and \eqref{eq:varY} enforce $\XIK$, $\ZIK$, and $\YK$ to be integer/binary variables and trivial upper bounds $\{v_{i,k}\}$ for variables $\left\{\XIK\right\}$ where 
\begin{equation*}
	v_{i,k} = \min\left\{\sum_{k \in \KSET}n_{i,k}, ~\min_{r \in \RSET} \left \{ \left\lfloor  \frac{s_{k,r}}{u_{i,r}} \right\rfloor \right  \}\right\}, ~\forall~i \in \ISET, ~\forall~k \in \KSET.
\end{equation*}

The objective function \eqref{eq:obj} to be minimized is 
the sum of the activation cost of servers, 
the cost of allocating all \VMs to servers, 
and the cost of migrating \VMs among servers. 
Here $\CRUN \geq 0$, $\CASSIGN\geq 0$, and $\CMIG\geq 0$ denote the activation cost of server $k$, 
the cost of allocating a \VM of type $i$ to server $k$, and the cost of migrating a \VM of type $i$ to server $k$, respectively.
In practice, $\CRUN$ and $\CASSIGN$ reflect the power consumption of activating server $k$ and allocating a \VM of type $i$ to server $k$, respectively \cite{mazumdar2017power}.
As demonstrated in \cite{mazumdar2017power}, the \VM migration process creates non-negligible energy overhead on the source and destination servers (see also \cite{dargie2014estimation,rybina2013investigation}).
Therefore, we follow Mazumdar and Pranzo \cite{mazumdar2017power} to consider the energy cost as the migration cost and assume $ \CMIG=\CASSIGN$ for all $i \in \ISET$ and $k\in \KSET$.
Notice that for a \VM of type $i \in \ISET$ that is previously hosted at server $k \in \KSET$, if it is still run on server $k$ after the \VM consolidation, it will only incur the allocation cost at the source server; and if it is migrated to a destination server $k'$, it will incur the allocation costs at the source and destination servers $k$ and $k'$ (as $ \CMIG=\CASSIGN$ for all $i \in \ISET$ and $k\in \KSET$).

Problem \eqref{eq:mathforms} is an \MIP problem since the nonlinear constraint \eqref{eq:migration} can be equivalently linearized. 
Indeed, by $\CMIG \geq 0$, constraint \eqref{eq:migration} can be equivalently presented as the following \emph{linear} constraint
\begin{equation}
	\tag{1d'}
	\XIK - \NIK \leq  \ZIK , ~ \forall~i \in \ISET,~\forall~k \in \KSET.  \label{eq:migration1} \\
\end{equation}
Note that the linearity of all variables in problem \eqref{eq:mathforms} is vital, which enables to leverage the efficient \MIP solver such as \cplex \cite{CPLEX} to solve the problem to global optimality.

\subsection{Extensions of the virtual machine consolidation problem}
\label{sect:extensions_of_vmcp}
The \VMCP attempts to (re)allocate \VMs to servers subject to resource constraints.  
In practice, however, a \VM manager should also deal with other practical requirements.
In this subsection, we introduce four side constraints derived from practical applications in the literature
including new incoming \VMs \cite{mazumdar2017power}, 
restriction on the maximum number of \VM migrations \cite{mazumdar2017power,speitkamp2010mathematical,bichler2006capacity}, 
restriction on the maximum number of \VMs on servers \cite{mazumdar2017power,anselmi2008service}, 
and restriction on allocating a \VM type to a server \cite{anselmi2008service,dhyani2010constraint}.
All these constraints can be incorporated into the \VMCP.\\[5pt]
{\noindent $\bullet$ New incoming \VMs}\\[5pt]   
\indent The cloud data center needs to embed new incoming \VMs into the servers \cite{mazumdar2017power}. 
We denote the number of new incoming \VMs of type $i$, $i \in \ISET$, as $\DINEW$.
To deal with new incoming \VMs, we introduce integer variable $\XIKNEW$ to denote the number of new incoming \VMs of type $i$ allocated to server $k$.
To ensure all new incoming \VMs are allocated to servers, we need constraints
\begin{align}
	\sum\limits_{k \in \KSET} \XIKNEW = \DINEW,~\forall~i \in \ISET. \label{eq:new_imcoming_vms}
\end{align}
In addition, the term $\sum_{i \in \ISET}\sum_{k \in \KSET} \CASSIGN \XIKNEW$ must be included into the objective function of problem \eqref{eq:mathforms} to reflect 
the cost of allocating all new \VMs to the servers.
Moreover, as the embedding new incoming \VMs into a server also leads to resource consumption, the capacity constraint \eqref{eq:resource} must be changed into
\begin{align}
	&\!\!\!\!\! \sum\limits_{i \in \ISET} \UIR \XIK + \sum\limits_{i \in \ISET} \UIR\XIKNEW \leq \SKR \YK,~\forall~k \in \KSET,~  
	\forall~r\in \RSET. \label{eq:transformed_resource}
\end{align}
{\noindent $\bullet $ Maximum number of \VM migrations}\\[5pt]
\indent Due to limit administrative costs, the cloud data center manager requires that the number of migrations of \VMs cannot exceed a predefined number $\L$ \cite{mazumdar2017power,bichler2006capacity,speitkamp2010mathematical}, 
which can be enforced by 
\begin{align}
	\sum\limits_{i \in \ISET}\sum\limits_{k \in \KSET} \ZIK \leq \L, \label{eq:limit_number_vm_migrations} 
\end{align}
{\noindent $\bullet$ Maximum number of \VMs on servers}\\[5pt] 
\indent In practice, the cloud data center manager may spend a lot of time in the event of a server failure if too many \VMs are allocated to the server \cite{mazumdar2017power,anselmi2008service}.
Consequently, it is reasonable to impose a threshold $\MK$ on the maximum number of  \VMs that are allocated to server $k \in \KSET$:
\begin{align}
	\sum_{i \in \ISET} \XIK \leq \MK ,~\forall~k \in \KSET. \label{eq:max_number_vms}
\end{align}
If new incoming \VMs are also required to be embedded in the servers, then constraint \eqref{eq:max_number_vms} should be rewritten as 
\begin{equation}
	\sum_{i \in \ISET} \XIK + \sum_{i \in \ISET} \XIKNEW \leq \MK ,~\forall~k \in \KSET. \label{eq:max_number_vms1}
\end{equation}
{\noindent $\bullet$ Allocation restriction constraints}\\[5pt]  
\indent A subset of servers may exhibit some properties, 
such as kernel version, clock speed, and the presence of an external IP address.
It is impossible to allocate \VMs with specific attribute requirements to a server that does not provide these attributes \cite{anselmi2008service,dhyani2010constraint}.
This can be enforced by constraint 
\begin{align}
		\label{eq:attribute}
	\XIK = 0,~\forall~i \in \ISET,~\forall~k \in \mathcal{K}(i),
\end{align}
where $\mathcal{K}(i) \subseteq \KSET$ denotes the set of servers that cannot process \VMs of type $i$. 
Similarly, if new incoming \VMs are also required to be embedded in the servers, then constraint 
\begin{equation}
	\label{eq:attribute1}
	\XIKNEW = 0,~\forall~i \in \ISET,~\forall~k \in \mathcal{K}(i),
\end{equation}
needs to be included in problem \eqref{eq:mathforms}.

\subsection{Comparison with the formulations in \cite{speitkamp2010mathematical} and \cite{mazumdar2017power}}
\label{sect:comparison_with_two_other_formulations}
To formulate the \VMCP or its extensions, other \MIP formulations in the literature can be used.
In this subsection, we briefly review the two \MIP formulations in \cite{speitkamp2010mathematical} and \cite{mazumdar2017power} 
and show the advantages of the proposed formulation over these two existing formulations.
For easy presentation and fair comparison, we change the objective functions and constraints of the problems in \cite{speitkamp2010mathematical} and \cite{mazumdar2017power} to 
be the same as those of the basic \VMCP in Section \ref{sect:vm_consolidation_problem} \footnote{
	We remark that (i) the problem in \cite{speitkamp2010mathematical} attempts to allocate new incoming \VMs to the servers such that the activation cost is minimized subject to the resource constraint; (ii)
	and the problem in \cite{mazumdar2017power} is an extension of the basic \VMCP \eqref{eq:mathforms} in which the new incoming \VMs (constraints \eqref{eq:new_imcoming_vms}-\eqref{eq:transformed_resource}) and the limitation on the number of \VM migrations (constraint \eqref{eq:limit_number_vm_migrations}) are considered.}.

First, we compare the proposed formulation \eqref{eq:mathforms} with the formulation in \cite{speitkamp2010mathematical}.
Different from our proposed formulation where a 2-index integer variable $\XIK$ is used to represent the number of \VMs of type $i$ allocated to server $k$, in the formulation of  \cite{speitkamp2010mathematical}, a 3-index binary variable $x_{i,v,k}$ is used to represent whether or not \VM $v$ of type $i$ is allocated to server $k$ after \VM consolidation.
Similarly, $n_{i,v,k}$ and $z_{i,v,k}$ are used to represent whether \VM $v$ of type $i$ is allocated to server $k$ before \VM consolidation and 
whether or not \VM $v$ of type $i$ is migrated to server $k$, respectively.
The formulation in  \cite{speitkamp2010mathematical} can be written as
\begin{subequations}
	\label{eq:mathforms_sb}
	\setlength{\abovedisplayskip}{5pt}
	\setlength{\belowdisplayskip}{5pt}
	\begin{align} 
	\mathop{\mathrm{min}} \ 
	&\sum_{k \in \KSET} \CRUN \YK 
	+\sum_{i \in \ISET}\sum_{v \in \DSET(i)}\sum_{k \in \KSET} \CASSIGN n_{i,v,k}   + \sum\limits_{i \in \ISET}\sum_{v \in \DSET(i)}\sum\limits_{k \in \KSET} \CMIG z_{i,v,k}  \label{eq:obj_sb}\\
	\text{s.t.} ~~& \sum\limits_{i \in \ISET} \sum_{v \in \DSET(i)} \UIR x_{i,v,k} \leq \SKR \YK,~\forall~k \in \KSET,~  
	\forall~r \in \RSET,  \label{eq:resource_sb}   \\
	& \sum\limits_{k \in \KSET} x_{i,v,k} = 1, ~ \forall~i \in \ISET,~\forall~v \in \DSET(i),   \label{eq:allocation_sb}  \\
	&  x_{i,v,k} - n_{i,v,k} \leq z_{i,v,k} , ~ \forall~i \in \ISET,~\forall~v \in \DSET(i),~\forall~k \in \KSET,  \label{eq:migration_sb} \\
	& x_{i,v,k}, z_{i,v,k} \in \{0,1\},~\forall~i \in \ISET,~\forall~v \in \DSET(i),~\forall~k \in \KSET, \label{eq:varX_sb} \\
	& \YK \in \{0,1\},~\forall~k \in \KSET, \label{eq:varY_sb} 
	\end{align}
\end{subequations} 
where $\DSET(i)=\left\{1, \ldots, \DI\right\}$. 
Though formulations \eqref{eq:mathforms} and \eqref{eq:mathforms_sb} are equivalent in terms of returning the same optimal solution, 
the problem size of the proposed formulation \eqref{eq:mathforms}  is much smaller than that of \eqref{eq:mathforms_sb}. 
Indeed, the number of variables and constraints in \eqref{eq:mathforms} are $\mathcal{O}(|\KSET||\ISET|)$ and $\mathcal{O}(|\KSET|(|\ISET|+|\RSET|))$, respectively, while those in
\eqref{eq:mathforms_sb} are $\mathcal{O}(|\KSET|\sum_{i \in \ISET}\DI)$ and $\mathcal{O}(|\KSET|(\sum_{i \in \ISET}\DI+|\RSET|))$, respectively. 
In practice, the requested loads of many \VMs such as CPU cores and normalized memory, are identical \cite{AlibabaCluster}, implying that $\left|\mathcal{I}\right| \ll \sum_{i \in \ISET}\DI$.

Next, we compare the proposed formulation \eqref{eq:mathforms} with the formulation in \cite{mazumdar2017power}. 
Different from the proposed formulation \eqref{eq:mathforms} where a 2-index integer variable $\ZIK$ is used to represent the number of \VMs of type $i$ migrated to server $k$, in the formulation in \cite{mazumdar2017power}, a 3-index integer variable $z_{i,j,k}$ is used to indicate the number of \VMs of type $i$ migrated from server $j$ to server $k$.
The mathematical formulation in \cite{mazumdar2017power} can be presented as 
\begin{subequations}
\setlength{\abovedisplayskip}{5pt}
\setlength{\belowdisplayskip}{5pt}
\label{eq:mathforms_int}
\begin{align}
\min
& \sum_{k \in \KSET} \CRUN \YK+  \sum_{i \in \ISET}\sum_{k \in \KSET} \CASSIGN \NIK + \sum_{i \in \ISET}\sum_{j \in \KSET}\sum_{k \in \KSET} \CMIG z_{i,j,k}  \label{eq:obj_int}  \\
 \text{s.t.} ~& \NIK + \sum_{j \in \KSET} (z_{i,j,k}-z_{i,k,j}) \geq 0,~\forall~ i\in \ISET,~ \forall ~k \in \KSET,\\
& \sum\limits_{i \in \ISET} \UIR \left (\NIK + \sum_{j \in \KSET}(z_{i,j,k} - z_{i,k,j})\right) \leq \SKR \YK, \nonumber \\
&\qquad\qquad\qquad\qquad\qquad\qquad~ \forall ~k \in \KSET,~\forall~ r\in \RSET,   \label{eq:resource_int}  \\
& z_{i,j,k}\in \{0, \ldots, \NIK\},~\forall~ i \in \ISET,~\forall~ j \in \KSET,~\forall~ k \in \KSET, \label{eq:varXZ_int} \\
&\YK \in \{0,1\},~\forall~ k \in \KSET. \label{eq:varY_int}
\end{align}
\end{subequations}
Though both numbers of constraints in \eqref{eq:mathforms_int} and \eqref{eq:mathforms} are   $\mathcal{O}(|\KSET|(|\ISET|+|\RSET|))$,
the number of variables in \eqref{eq:mathforms_int}, however, is much larger than that in \eqref{eq:mathforms} ($\mathcal{O}(|\KSET|^2|\ISET|)$ versus $\mathcal{O}(|\KSET||\ISET|)$).

Based on the above discussion, we can conclude that the proposed formulation \eqref{eq:mathforms} for the \VMCP is much more compact than the two existing formulations \eqref{eq:mathforms_sb} and \eqref{eq:mathforms_int} in \cite{speitkamp2010mathematical} and \cite{mazumdar2017power}.
Therefore, formulation \eqref{eq:mathforms} can be much easier to solve than formulations  \eqref{eq:mathforms_sb} and \eqref{eq:mathforms_int} by standard \MIP solvers (e.g., \gurobi and \cplex), as demonstrated in the Section \ref{sect:comparison_with_other_two_mathematical_formulations}.
In addition, as shown in \cite[Theorem 1]{speitkamp2010mathematical}, the basic \VMCP is strongly NP-hard even for the case $\CMIG = 0$ for all $i \in \ISET$ and $k \in \KSET$, meaning that customized (exact or heuristic) algorithms for efficiently solving the \VMCP is needed in practice, especially when the problem's dimension is large. 
We remark that compact formulation \eqref{eq:mathforms} is an important step towards developing an efficient customized algorithm for solving the \VMCP (e.g., it will lead to a compact \LP relaxation, which is the basis of many efficient customized \LP relaxation based algorithms).
In the next section, we shall develop an efficient customized exact algorithm based on formulation \eqref{eq:mathforms} for solving the \VMCP.

\section{The cut-and-solve algorithm}
\label{sect:solution_methodology} 
In this section, we shall develop a cut-and-solve (\CS) algorithm to efficiently obtain an optimal solution of the \VMCP.
Specifically, in Section \ref{sect:cut-and-solve}, we demonstrate how to apply the \CS procedure \cite{climer2006cut}, a tree search procedure, to solve the \VMCP.
Then, in Section \ref{sect:exact_separation_approach}, we propose a new relaxation for the \VMCP and develop a cutting plane approach to solve this new relaxation.
The newly proposed relaxation provides a stronger lower bound than the natural continuous relaxation of the \VMCP, which effectively reduce the search tree size, making a more efficient \CS algorithm.
For simplicity of presentation, we only present the algorithm for the basic \VMCP in Section \ref{sect:vm_consolidation_problem} as the proposed algorithm can easily be adapted to solve the extension of the \VMCP in Section \ref{sect:extensions_of_vmcp}.

\subsection{Cut-and-solve}
\label{sect:cut-and-solve}
The \CS procedure, first proposed by Climer and Zhang \cite{climer2006cut}, has been applied to solve well-known structured mixed binary programming problems, e.g., the traveling salesman problem \cite{climer2006cut}, the facility location problem \cite{yang2012cut,yang2019effective,gadegaard2018improved}, and the multicommodity uncapacitated fixed-charge network design problem \cite{Zetina2019}.
For these problems, \CS has been demonstrated to be much more efficient than generic \MIP solvers. 
In this subsection, we shall demonstrate how to apply the \CS procedure to solve \VMCPs. 
\begin{figure}[htbp]
	\centering
	\includegraphics[width=\linewidth]{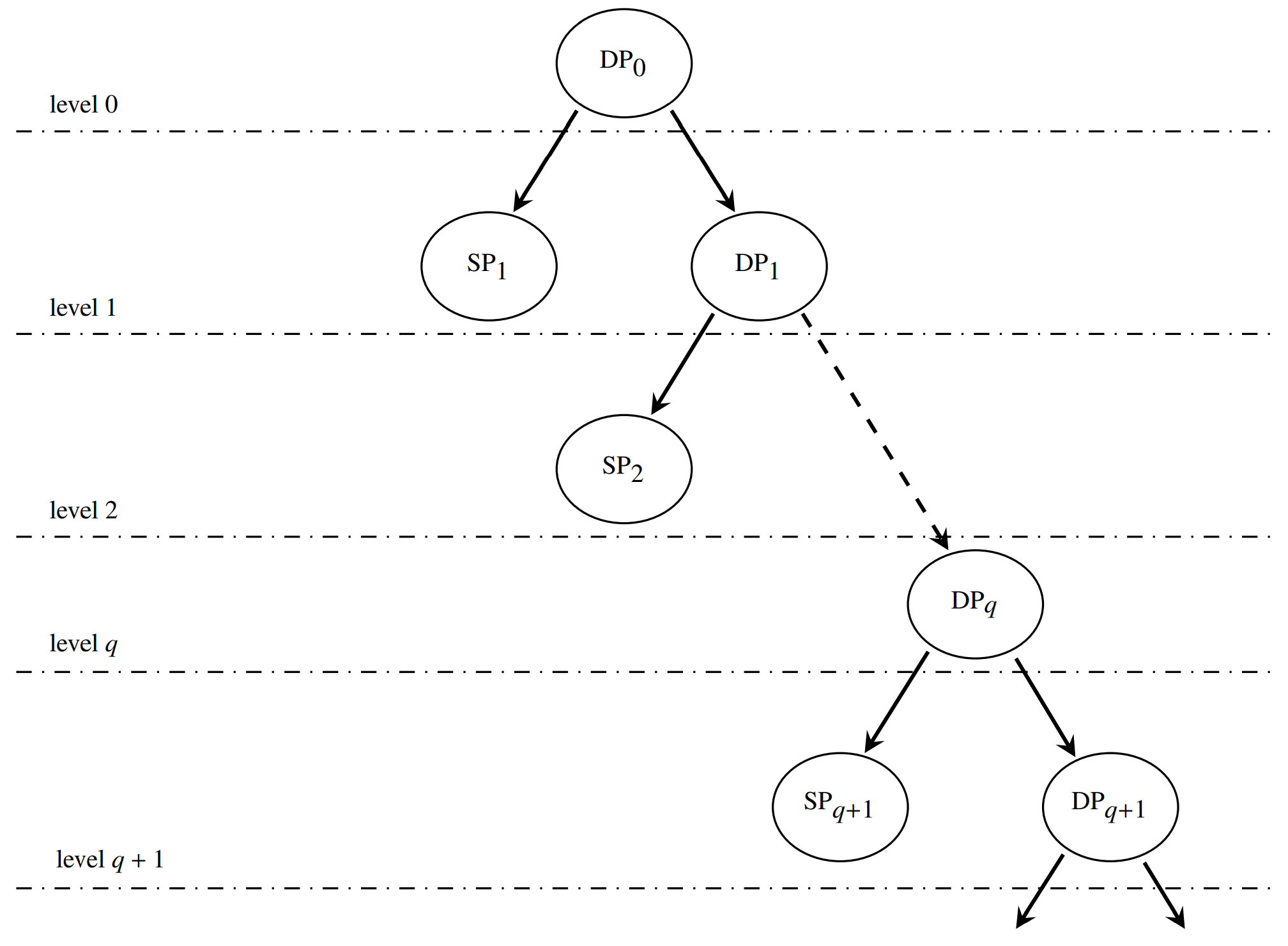}
	\caption{\CS search tree}
	\label{fig:cut-and-solve_search_tree}
\end{figure}

The \CS procedure is essentially a branch-and-bound algorithm in which a search tree (see Fig. \ref{fig:cut-and-solve_search_tree}) will be constructed during the search.
More specifically, the \CS denotes the original problem \eqref{eq:mathforms} as $\text{DP}_0$ and the best objective value of all feasible solutions found so far as $\text{UB}_{\min}$ (the corresponding feasible solution is denoted as $(x_{\min}, y_{\min}, z_{\min})$).
At the $q$-th ($q\in \mathbb{Z}_+$) level of the \CS search tree (see Fig. \ref{fig:cut-and-solve_search_tree}),
we first solve the \LP relaxation of $\text{DP}_q$, denoted its solution by $(x^*, y^*,z^*)$ and the corresponding objective value by $\text{LB}_{q}$. 
 \begin{itemize}
	\item [i)] If $(x^*,y^*,z^*)$  is an integer vector, then $(x^*,y^*,z^*)$ is an optimal solution of problem $\text{DP}_{q}$ and hence the searching procedure can be terminated;
	\item [ii)] If $\text{LB}_{q} \geq \text{UB}_{\min}$, then $(x_{\min}, y_{\min}, z_{\min})$ must be an optimal solution of problem \eqref{eq:mathforms} and hence the searching procedure can also be terminated.
\end{itemize}
If neither (i) nor (ii) is satisfied, we then decompose the $\text{DP}_{q}$ into two subproblems $\text{SP}_{q+1}$ and $\text{DP}_{q+1}$, which are defined as $\text{DP}_{q}$ with the so-called \emph{piercing cuts} \cite{climer2006cut}
\begin{equation}
	\sum\limits_{k \in \mathcal{S}} y_k  \leq \varphi, \label{eq:fix}
\end{equation}
and
\begin{equation}
	\sum\limits_{k \in \mathcal{S}} y_k  \geq \varphi + 1, \label{eq:piercing_cut}
\end{equation}
respectively. 
Here $\varphi $ is a positive integer and $\mathcal{S}$ is a subset of $\KSET$.
A common selection used in \cite{climer2006cut,yang2012cut,yang2019effective,gadegaard2018improved} is to set $\varphi = 0$, making the right subproblem $\text{DP}_{q+1}$ and the left subproblem $\text{SP}_{q+1}$ being a \emph{dense problem} and a \emph{sparse problem} (in terms of small solution space). 
Indeed, for $\text{SP}_{q+1}$, such a selection forces (i) $\YK = 0$ for all $k \in \mathcal{S}$ and (ii) $\XIK = 0$ and $\ZIK = 0$ for all $i \in \ISET$ and $k \in \mathcal{S}$ (implied by constraints \eqref{eq:resource} and \eqref{eq:migration1}).
Due to the small solution space, sparse problem $\text{SP}_{q+1}$ can be solved by standard \MIP solvers within a reasonable time limit, especially when $|\mathcal{S}|$ is large.  
Moreover, as far as the sparse problem $\text{SP}_{q+1}$ has a feasible solution, it can provide an upper bound $\text{UB}_{q+1}$ for problem \eqref{eq:mathforms}.
If $\text{UB}_{q+1} < \text{UB}_{\min}$, $\text{UB}_{\min} \coloneqq \text{UB}_{q+1}$ will be updated.  
The above procedure is repeated until case i) or ii) is satisfied.
The details are summarized in Algorithm \ref{alg:cses_algorithm}.


We now discuss the selection of $\mathcal{S}$ in \eqref{eq:fix} and \eqref{eq:piercing_cut}, which defines subproblems $\text{SP}_{q+1}$ and $\text{DP}_{q+1}$. 
A straightforward strategy is to choose the set $\mathcal{S}$ as
\begin{equation}
	\mathcal{S}_1 := \left\{ k \in \KSET\, : \, y^*_k = 0  \right\},
\end{equation}
where $y^*$, as stated, is the optimal solution of $\text{DP}_{q}$'s relaxation.
The rationale behind strategy lies in the fact that an optimal solution to $\text{DP}_{q}$ usually has a number of components that are identical to those of an optimal solution to $\text{DP}_{q}$'s relaxation.
Consequently, it is more likely to find a good feasible solution (in terms of small objective value) by solving the sparse problem $\text{SP}_{q+1}$. 
However, our preliminary experiments showed that due to the (general) degeneracy of the \LP relaxation of $\text{DP}_{q}$, such a simple strategy cannot improve the lower bound (returned by solving the \LP relaxation of $\text{DP}_{q+1}$) fast enough, leading to a large search tree. 
For this reason, we use a more sophisticated strategy, suggested by Climer and Zhang \cite{climer2006cut}, to determine  $\mathcal{S}$, which is detailed as follows.
From the basic \LP theory, the reduced cost $r^*_k$ of a variable $\YK$ is a lower bound on the increase of the objective value if the value of this variable is changed by one unit. 
These reduced costs can be obtained by solving the \LP relaxation of $\text{DP}_{q+1}$.
Moreover, 
\begin{itemize}
	\item [1)] If $y_k^*=0$, then $r_k^* \geq 0$; 
	\item [2)] If $y_k^* = 1$, then $r_k^* \leq 0$;   
	\item [3)] If $0 < y_k^* < 1$, then $r_k^*=0$.
\end{itemize}
For more details, we refer to \cite[Chapter 5]{Dantzig2003}. 
Using the reduced costs, we set $\mathcal{S}$ as 
\begin{equation}
	\mathcal{S}_2:= \left\{ k \in \KSET\, : \, r^*_k \geq \epsilon   \right\},
\end{equation}
where $\epsilon> 0$ that controls the problem size of $\text{SP}_{q+1}$ and the size of \CS's search tree.
Indeed, it follows that $\mathcal{S}_2 \subseteq \mathcal{S}_1$, leading to a relatively large sparse problem $\text{SP}_{q+1}$, as compared to that defined by $\mathcal{S}_1$.
The larger the $\epsilon$ is, the larger the solution space of $\text{SP}_{q+1}$ is.
However, this also enables to obtain a much large better lower bound returned by solving $\text{DP}_{q+1}$'s relaxation (indeed, the difference of the objective values $\text{DP}_{q}$'s and $\text{DP}_{q+1}$'s relaxations is at least $\min\left\{ r_k\, : \, k \in \mathcal{S}_2 \right\} \geq \epsilon > 0$), yielding a smaller \CS search tree.
In our implementation, we set $\epsilon = 10^{-4}$.
\begin{algorithm}[t]
	\caption{\CS algorithm} 
	\label{alg:cses_algorithm}
	Initialize $\text{UB}_{\min} \coloneqq +\infty$ and $q \coloneqq 0$;\\
	\While{true}
	{
		Solve the \LP relaxation of problem $\text{DP}_q$ with the solution $(x^*, y^*, z^*)$ and the objective value $\text{LB}_{q}$;\\
		\If{$\text{\rm LB}_{q} \geq \text{\rm UB}_{\min}$}
		{
			Stop and return the optimal solution $(x_{\min}, y_{\min}, z_{\min})$;
		}
		\If{$(x^*, y^*, z^*)$ is an integer vector}
		{
			Stop and return the optimal solution $({x}^*, {y}^*, {z}^*)$;
		}
		Use the piercing cuts \eqref{eq:fix} and \eqref{eq:piercing_cut} to decompose $\text{DP}_q$ into two subproblems $\text{SP}_{q+1}$ and $\text{DP}_{q+1}$;\\
		Solve $\text{SP}_{q+1}$ to optimality by an \MIP solver and denote its solution and objective value by $(x',y',z')$ and $\text{UB}_{q+1}$;\\
		\If{$\text{\rm UB}_{q+1} < \text{\rm UB}_{\min}$}
		{
			Update $(x_{\min}, y_{\min}, z_{\min})\coloneqq(x',y',z')$ and $\text{UB}_{\min} \coloneqq \text{UB}_{q+1}$;\\
		}
		Set $q \coloneqq  q+1$;
	}
\end{algorithm}
\subsection{An improved relaxation problem and the cutting plane approach}
\label{sect:exact_separation_approach}
The \LP relaxation of problem \eqref{eq:mathforms}, obtained by ignoring the integrality requirement on all decision variables, is as follows:
\begin{subequations}
	\label{eq:mathforms_lp}
	\begin{align} 
		\mathop{\mathrm{min}} \ 
		&\sum_{i \in \ISET}\sum_{k \in \KSET} \CASSIGN \NIK  +\sum_{k \in \KSET} \CRUN \YK 
		+ \sum\limits_{i \in \ISET}\sum\limits_{k \in \KSET} \CMIG \ZIK  
		\\
		\text{s.t.} ~~& 
		\eqref{eq:resource},\eqref{eq:allocation_general}, \eqref{eq:migration1}, & \label{eq:vmcp_lp}\\
		& \XIK,~\ZIK \in \mathbb{R_+},~ \XIK \leq v_{i,k},~\forall~i \in \ISET,~\forall~k \in \KSET, \label{eq:varX_lp}\\
		&  \YK\in [0,1],~\forall~k \in \KSET. \label{eq:varY_lp}
	\end{align} 
\end{subequations}
However, the feasible region of \LP relaxation \eqref{eq:mathforms_lp} is actually enlarged compared with that of the original problem \eqref{eq:mathforms}.
As a result, this relaxation usually provides a weak lower bound, leading to a large \CS search tree; see Section \ref{sect:efficiency_of_cses} further ahead.
To overcome this weakness, we present a new relaxation that has a much more compact feasible region and hence can provide a much stronger lower bound, as compared with relaxation \eqref{eq:mathforms_lp}.
Then we provide a cutting plane approach to solve this newly proposed relaxation. 

\subsubsection{An improved relaxation problem}
\label{sect:motivation}


To proceed, we observe that in problem \eqref{eq:mathforms}, it is required that
\begin{align}
	 (x_{\cdot,k},\YK) &\in \mathcal{X}(r,k):=\bigg\{ (x_{\cdot, k},\YK)\in\mathbb{Z}_{+}^{|\ISET|}\times\{0,1\}:  \nonumber\\
	  & \sum\limits_{i \in \ISET} \UIR \XIK \leq \SKR \YK,~\XIK \leq v_{i,k},~ \forall~i \in \ISET ~\bigg \},\label{integerX}
\end{align}
for all $k \in \KSET$ and $r \in \RSET$.
However, in problem \eqref{eq:mathforms_lp}, such a requirement is relaxed to
\begin{equation}
	\label{relaxationX}
\begin{aligned}
	& (x_{\cdot,k},\YK) \in \mathcal{X}_{\text{L}}(r,k):=\bigg\{ (x_{\cdot, k},\YK)\in\mathbb{R}_{+}^{|\ISET|}\times[0,1]:\\
	&\qquad \qquad\sum\limits_{i \in \ISET} \UIR \XIK \leq \SKR \YK,~\XIK \leq v_{i,k},~\forall~ i \in \ISET ~\bigg \},
\end{aligned}
\end{equation}
making a much larger feasible region for $ (x_{\cdot,k},y_k)$.
Notice that by \eqref{integerX} and $\mathcal{X}(r,k) \subseteq \conv(\mathcal{X}(r,k))$, 
\begin{equation}
	\label{convxrk}
	(x_{\cdot,k},\YK) \in  \conv(\mathcal{X}(r,k)),
\end{equation}
must hold for every feasible solution of problem \eqref{eq:mathforms}.
Therefore, our first refinement of relaxation \eqref{eq:mathforms_lp} is to replace \eqref{relaxationX} with \eqref{convxrk}.
As $ \conv(\mathcal{X}(r,k)) \subseteq \mathcal{X}_{\text{L}}(r,k)$, such a refinement can possibly make a smaller feasible region for $(x_{\cdot,k},\YK) $ when relaxing the integrality requirement on variables $(x_{\cdot,k},\YK) $. 

Next, we pose more restrictions on vector $y$.
In particular, for $r \in \RSET$, adding all the constraints in \eqref{eq:resource} for all $k \in \KSET$ and using constraint \eqref{eq:allocation_general}, 
we obtain  
\begin{equation}
	\label{totalcap}
	\sum\limits_{k \in \KSET}\SKR \YK\geq\sum_{i \in \ISET} \DI \UIR.
\end{equation}
The above constraint requires that the total resources of activated servers should be larger than or equal to the total required resources of all \VMs. 
We remark that in problem \eqref{eq:mathforms}, it is required that 
\begin{equation}
	y \in \mathcal{Y}(r):=\biggl\{ y\in \{0,1\}^{|\KSET|}\,:\,\eqref{totalcap}\biggr\},~\forall~r \in\RSET,  \label{eq:aggregate_con}
\end{equation}
while in relaxation \eqref{eq:mathforms_lp}, $y\in \{0,1\}^{|\KSET|}$ is relaxed to $y\in [0,1]^{|\KSET|}$, and as a result, it follows
\begin{equation}
	y \in \mathcal{Y}_{\text{L}}(r):=\biggl\{ y\in [0,1]^{|\KSET|}\,:\,\eqref{totalcap}\biggr\},~\forall~r \in\RSET. 
\end{equation}
Our second refinement of relaxation \eqref{eq:mathforms_lp} is to enforce 
\begin{equation}
	\label{yconv}
	y \in \conv(\mathcal{Y}(r)),~\forall~r \in\RSET. 
\end{equation}
Similarly, as $\conv(\mathcal{Y}(r)) \subseteq \mathcal{Y}_{\text{L}}(r)$, enforcing \eqref{yconv} in relaxation \eqref{eq:mathforms_lp} can possibly make a smaller feasible region for vector $y$ when relaxing the integrality requirement on variables $y$.

With the above two refinements, we obtain the new relaxation for problem \eqref{eq:mathforms}:
\begin{subequations}
\label{eq:mathforms_lp_es}
\begin{align} 
\mathop{\mathrm{min}} \ 
&\sum_{i \in \ISET}\sum_{k \in \KSET} \CASSIGN \NIK  +\sum_{k \in \KSET} \CRUN \YK 
+ \sum\limits_{i \in \ISET}\sum\limits_{k \in \KSET} \CMIG \ZIK 
\\
\text{s.t.} ~~& 
\eqref{eq:allocation_general}, \eqref{eq:migration1}, \eqref{eq:varX_lp}, \eqref{eq:varY_lp}, & \label{eq:vmcp_polytope} \\
&    (x_{\cdot,k},\YK)  \in \conv(\mathcal{X}(r,k)),~\forall~k \in \KSET, ~\forall~r \in \RSET,  \label{eq:resource_polytope} \\
& y \in \conv(\mathcal{Y}(r)), ~\forall~r \in \RSET. \label{eq:aggregation_polytope}
\end{align}
\end{subequations}
As discussed, \eqref{eq:resource_polytope} and \eqref{eq:aggregation_polytope} make a smaller feasible region for the decision variables in relaxation \eqref{eq:mathforms_lp_es}, as compared with that of relaxation \eqref{eq:mathforms_lp}.
As a result, relaxation \eqref{eq:mathforms_lp_es} can provide a tighter lower bound and hence makes a smaller \CS search tree; {see Section \ref{sect:efficiency_of_cses} further ahead}.

\subsubsection{The cutting plane approach to solve \eqref{eq:mathforms_lp_es}}
\label{sect:framework}

$\conv(\mathcal{X}(r,k))$ and $\conv(\mathcal{Y}(r))$ are polytopes that can be expressed 
by a finite number of inequalities, called \emph{facet-defining inequalities}; see, e.g, \cite[Proposition 8.1]{Wolsey2021}. 
However, it is not practical to solve problem \eqref{eq:mathforms_lp_es} by enumerating 
all inequalities of $\conv(\mathcal{X}(r,k))$ and $\conv(\mathcal{Y}(r))$ due to the following two reasons.
First, it is computationally expensive to find all inequalities that are required to describe $\conv(\mathcal{X}(r,k))$ and $\conv(\mathcal{Y}(r))$.
Second, the numbers of inequalities that describe $\conv(\mathcal{X}(r,k))$ and $\conv(\mathcal{Y}(r))$ 
are potentially huge (usually exponential), making it hard to solve problem \eqref{eq:mathforms_lp_es}.
Due to this, we use a cutting plane approach to solve problem \eqref{eq:mathforms_lp_es}, 
which is used, e.g., in \cite{avella2010computational} in the context of solving the \emph{generalized assignment problem}.
This approach is detailed as follows.
First, we solve relaxation \eqref{eq:mathforms_lp} to obtain its solution $(x^*,y^*)$.
Then, we solve the \emph{separation problem}, that is,
either (i) find a set of inequalities which are valid 
for $ \conv(\mathcal{X}(r,k))$,  $k \in \KSET$ and $r \in \RSET$, and $\conv(\mathcal{Y}(r))$,  $r \in \RSET$, 
but can cut off point $(x^*,y^*)$ (called violated inequalities) or (ii) prove $(x^*,y^*) \in \conv(\mathcal{X}(r,k))$ for all $k \in \KSET$ and $r \in \RSET$  
, and $y^* \in \conv(\mathcal{Y}(r))$ for all $r \in \RSET$.
For case (i), we add the violated inequalities into relaxation \eqref{eq:mathforms_lp} and solve it again. 
For case (ii), $(x^*,y^*)$ must be an optimal solution of problem \eqref{eq:mathforms_lp_es}. 
The above procedure is iteratively applied until case (ii) holds.  
In the following, we demonstrate how to solve the separation problem in detail. \\[5pt]
{\noindent $\bullet$ Integer knapsack set}\\[5pt] 
\indent To solve the separation problem over $\conv(\mathcal{X}(r,k))$ or $\conv(\mathcal{Y}(r))$, it suffices to consider the separation problem over $\conv(\mathcal{X})$, where $\mathcal{X}$ is the generic \emph{integer knapsack set}: 
\begin{equation*}
	\mathcal{X}=\left\{ x \in \mathbb{Z}_{+}^{ |\mathcal{N}|}:\sum_{i \in \mathcal{N}}a_{i}x_{i} \leq b,~x_{i}\leq v_{i},~\forall~i \in \mathcal{N} \right\}, \label{eq:conv-3}
\end{equation*}
$a_i \geq 0$,  $a_i v_i \leq b$ for all $i \in \mathcal{N}$, and $b \geq 0$.
Indeed, by replacing variable $y_{k}$ with $y'_{k}=1-y_{k}$ for all $k \in \mathcal{K}$ in $\mathcal{Y}(r)$, we obtain the so-called \emph{binary knapsack set}
\begin{equation*}
\mathcal{Y}'(r)=\left\{ y^{\prime} \in \left\{0,1\right\}^{\left|\KSET\right|}:\sum_{k \in \KSET}\SKR y'_{k} \leq \sum_{k \in \KSET}s_{k} -\sum_{i \in \ISET}\DI \UIR \right\},~\forall~r \in \RSET, \label{eq:conv-2}
\end{equation*}
which is a special case of integer knapsack set, that is, $v_i =1 $, $i \in \mathcal{N}$ in $\mathcal{X}$.
We remark that inequality $\sum_{k \in \KSET} \alpha_{k,r} y'_k \leq \beta_r$ is valid for $\conv(\mathcal{Y}'(r))$ if and only if $\sum_{k \in \KSET} \alpha_{k,r} (1-y_k) \leq \beta_r$ is valid for $\conv(\mathcal{Y}(r))$.
The set $\mathcal{X}(r,k)$ is a form of  
\begin{equation*}
\mathcal{X}_{y}=\left\{(x,y)\in \mathbb{Z}_{+}^{\left|\mathcal{N}\right|} 
\times \left\{0,1\right\}: \sum_{i \in \mathcal{N}}a_{i}x_{i} \leq by, ~x_{i}\leq v_{i},~\forall~i \in \mathcal{N}\right\}. \label{eq:conv-1} 
\end{equation*}
The following proposition shows that all nontrivial facet-defining inequalities of $\conv(\mathcal{X}_y)$ can be derived from facet-defining  inequalities of $\conv(\mathcal{X})$.
\begin{proposition}
	\label{th:non_trivial_facets}
	(i) All facet-defining inequalities of $\conv(\mathcal{X})$, except $x_i \geq 0$, $i \in \mathcal{N}$, are of the form $\sum_{i \in \mathcal{N}} \pi_i x_i \leq \pi_0$ with $\pi_i \geq 0$, $i \in \mathcal{N}$, and $\pi_0 > 0$; and
	(ii) all facet-defining inequalities of $\conv(\mathcal{X}_{y})$, except $x_i\geq 0$, $i \in \mathcal{N}$, $y \leq 1$, 
	have the form $\sum_{i \in \mathcal{N}}\pi_{i}x_{i} \leq \pi_{0} y$, where $\sum_{i \in \mathcal{N}}\pi_{i}x_{i} \leq \pi_{0}$ is a
	 facet-defining inequality of $\conv(\mathcal{{X}})$ differing from inequalities $x_i \geq 0$, $i \in \mathcal{N}$. 
\end{proposition}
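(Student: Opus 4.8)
The plan is to establish (i) and (ii) in order, with (ii) built on (i), relying throughout on two elementary facts about a full-dimensional polytope $P$: each facet of $P$ has a defining inequality that is unique up to positive scaling, and if $F\subseteq G$ are faces of $P$ with $\dim F=\dim G$ then $F=G$. Both $\conv(\mathcal{X})$ and $\conv(\mathcal{X}_y)$ are full-dimensional: $\mathcal{X}$ contains the affinely independent points $0$ and $e_i$, $i\in\mathcal{N}$ (using $v_i\ge 1$ and $a_iv_i\le b$), and $\mathcal{X}_y$ contains $(0,0)$, $(0,1)$, and $(e_i,1)$, $i\in\mathcal{N}$. The degenerate cases $v_i=0$ are handled by eliminating the corresponding fixed variable, and I would assume $a_i>0$ for all $i$ (otherwise $x_i$ is not coupled to $y$ and the statement must be read with $x_i\le v_i$ added to the exceptional list).

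For (i), the crucial structural property is that $\mathcal{X}$ is \emph{down-monotone}: if $x\in\mathcal{X}$ and $0\le x'\le x$ componentwise, then $x'\in\mathcal{X}$. Let $\sum_{i\in\mathcal{N}}\pi_i x_i\le\pi_0$ be facet-defining and not a positive multiple of any $x_j\ge 0$; then $\pi\neq 0$, and $0\in\mathcal{X}$ gives $\pi_0\ge 0$. I would first rule out $\pi_j<0$: if it held, every $\bar{x}\in\mathcal{X}$ attaining $\sum_i\pi_i\bar{x}_i=\pi_0$ would have $\bar{x}_j=0$ (otherwise $\bar{x}-e_j\in\mathcal{X}$ by down-monotonicity and $\sum_i\pi_i(\bar{x}-e_j)_i=\pi_0-\pi_j>\pi_0$, violating validity), so the facet $F=\conv(\mathcal{X}\cap F)$ would lie in the proper face $\{x\in\conv(\mathcal{X}):x_j=0\}$; matching dimensions forces $F$ to equal that face and hence $\sum_i\pi_i x_i\le\pi_0$ to be a positive multiple of $-x_j\le 0$, a contradiction. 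Thus $\pi\ge 0$. If moreover $\pi_0=0$, then on $F$ every nonnegative term $\pi_i\bar{x}_i$ vanishes, so $F\subseteq\{x_i=0\ \forall\,i\in J\}$ with $J=\{i:\pi_i>0\}\neq\emptyset$; the same dimension argument gives $|J|=1$, and again the inequality reduces to a multiple of $-x_j\le 0$, a contradiction. Hence $\pi_0>0$, proving (i).

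For (ii), I would use that, since $a_i>0$ for all $i$, the only element of $\mathcal{X}_y$ with $y=0$ is the origin, so $\mathcal{X}_y=(\mathcal{X}\times\{1\})\cup\{(0,0)\}$ and $\conv(\mathcal{X}_y)$ is the \emph{pyramid} over the base $\mathcal{X}\times\{1\}$ with apex $(0,0)$ (the apex lies outside $\{y=1\}\supseteq\mathrm{aff}(\mathcal{X}\times\{1\})$). By the standard description of the faces of a pyramid, the facets of $\conv(\mathcal{X}_y)$ are exactly the base $\mathcal{X}\times\{1\}$, with defining inequality $y\le 1$, together with $\conv((F\times\{1\})\cup\{(0,0)\})$ for each facet $F=\{x\in\conv(\mathcal{X}):\sum_i\pi_i x_i=\pi_0\}$ of $\conv(\mathcal{X})$; I would check that the latter has defining inequality the homogenization $\sum_i\pi_i x_i\le\pi_0 y$ (valid on $\mathcal{X}\times\{1\}$ and at $(0,0)$, tight on $(F\times\{1\})\cup\{(0,0)\}$, inducing a face of dimension $\dim F+1=n$). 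When $F$ is the facet $x_j\ge 0$ of $\conv(\mathcal{X})$ its homogenization is again $x_j\ge 0$; removing these and the base $y\le 1$ leaves exactly the inequalities $\sum_i\pi_i x_i\le\pi_0 y$ with $\sum_i\pi_i x_i\le\pi_0$ a facet of $\conv(\mathcal{X})$ other than $x_j\ge 0$, which by (i) moreover satisfy $\pi\ge 0$ and $\pi_0>0$.

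The main obstacle I anticipate is the careful handling of the exceptional facets together with a self-contained justification of the pyramid facet description: namely, that a facet of $\conv(\mathcal{X}_y)$ not containing the apex $(0,0)$ must lie in $\{y=1\}$ and hence coincides with the base $y\le 1$, while a facet that contains $(0,0)$, when intersected with $\{y=1\}$, yields a facet of $\conv(\mathcal{X})$ whose homogenization is the original inequality. Establishing this cleanly — and, for a fully elementary proof, re-running the down-monotonicity argument of (i) in the lifted space to pin down the signs and the strict positivity of $\pi_0$ — is the delicate part; the remaining verifications (validity, tightness, dimension counts) are routine.
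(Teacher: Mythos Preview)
Your proof of (i) is essentially the paper's: the paper simply cites the standard fact that $\mathcal{X}$ is an independent system (Nemhauser--Wolsey), whereas you spell out the underlying down-monotonicity argument, but the content is the same.

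For (ii) you take a genuinely different route. The paper substitutes $y'=1-y$ so that $\mathcal{X}_{y'}=\{(x,y')\in\mathbb{Z}_+^{|\mathcal{N}|}\times\{0,1\}:\sum_i a_i x_i + b y'\le b,\ x_i\le v_i\}$ is again an independent system; applying (i) to $\mathcal{X}_{y'}$ gives facets of the form $\sum_i\pi_i x_i+\pi_0 y'\le\alpha$ with nonnegative coefficients, i.e.\ $\sum_i\pi_i x_i\le \pi_0 y+(\alpha-\pi_0)$. Feasibility of $(0,0)$ forces $\alpha\ge\pi_0$, and if $\alpha>\pi_0$ the inequality is dominated by $\sum_i\pi_i x_i\le\alpha y$, so $\alpha=\pi_0$. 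The bijection with facets of $\conv(\mathcal{X})$ is then verified by lifting/restricting $|\mathcal{N}|$ affinely independent tight points through the slice $y=1$ together with the apex $(0,0)$. Your approach instead recognizes $\conv(\mathcal{X}_y)$ directly as the pyramid with apex $(0,0)$ over the base $\conv(\mathcal{X})\times\{1\}$ and invokes the standard face lattice of a pyramid; the homogenization $\sum_i\pi_i x_i\le\pi_0 y$ then falls out immediately. Your argument is more geometric and arguably cleaner once the pyramid description is granted, while the paper's is purely algebraic and reuses the independent-system machinery twice, avoiding any appeal to the pyramid face lattice. Both are correct; the only care point in your version (which you flag) is supplying a self-contained justification that every facet of the pyramid either misses the apex (hence equals the base $y\le1$) or contains it (hence arises by coning a facet of the base), but this is routine.
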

\begin{proof}
	The proof is relegated to the appendix.
\end{proof}
Let $(x^*,y^*) \in \mathbb{R}_+^{|\mathcal{N}|}\times [0,1]$. 
If $y^* =0$, then $x^* = \boldsymbol{0}$ and thus $(x^*,y^*) \in  \conv(\mathcal{X}_y)$.
Otherwise, by Proposition \ref{th:non_trivial_facets}, it follows that $(x^*,y^*) \in \conv(\mathcal{X}_y)$ if and only if $\bar{x} \in \conv(\mathcal{X})$ where $\bar{x}_i = \frac{x^*_i}{y^*}$ for all $i \in \mathcal{N}$.
Based on the above discussion, we shall concentrate on the separation problem over $\conv(\mathcal{X})$ in the following.
\\[5pt]
{\noindent $\bullet$ Exact separation for integer knapsack polytope $\conv(\mathcal{X})$}\\[5pt] 
\indent Next, we solve the separation problem of polytope $\conv(\mathcal{X})$, that is, either construct a hyperplane separating point $\bar{ x}$ from the $\conv(\mathcal{X})$ strictly, i.e.,
\begin{align}
	&\pi^{\top} x \leq \pi_{0} , ~\forall~x \in \conv(\mathcal{X})~ (\text{or equivalently},~\forall~x \in\mathcal{X}), \label{eq:hyperplane}
\end{align}
and 
\begin{align}
	& { \bar{x}}^{\top}\pi > \pi_{0}, \label{eq:violate}
\end{align}
or prove that none exists, i.e., $\bar{x} \in \conv(\mathcal{X})$.
This separation problem can be reduced to the following \LP problem
\begin{equation}
	\label{eq:separation_optimization}
	\omega(\bar{x}) = \max_{(\pi, \pi_0) \in \mathbb{R}^{|\mathcal{N}|+1}}\left\{{\bar{x}}^{\top}\pi - \pi_{0} : 
	\pi^{\top} x  \leq \pi_0,~\forall~x \in \mathcal{X} \right\}, 
\end{equation}
If $\omega( \bar{x}) \leq 0$, we must have  $\bar{x} \in \conv(\mathcal{X})$;
otherwise, $\pi^{\top}x \leq \pi_{0}$ is a valid inequality violated by $\bar{x}$.
By Proposition \ref{th:non_trivial_facets} (i), we can, without loss of generality, add $\pi_i \geq 0$ for all $i \in \mathcal{N}$ and $\pi_0 >0$ into problem \eqref{eq:separation_optimization}.
Moreover, we can further normalize $\pi_0$ as $1$ (as $\pi_0 >0$) and obtain the following equivalent problem
\begin{equation}
\omega(\bar{x})  = \max_{\pi \in \mathbb{R}^{|\mathcal{N}|}_+}\left\{{ \bar{x}}^{\top} \pi : 
 \pi^{\top} x  \leq 1,~\forall~x \in \mathcal{X} \right\}. \label{eq:sepa_problem}
\end{equation}
In particular, letting $\pi^*$ be an optimal solution of \eqref{eq:sepa_problem}, if ${ \bar{x}}^{\top}\pi^{*} \leq 1$, we prove  $\bar{x} \in \conv(\mathcal{X})$;
otherwise, we find the inequality ${\pi^{*}}^{\top}x \leq 1$ violated by $\bar{x}$.
One weakness of problem \eqref{eq:sepa_problem} is its large problem size. 
Indeed, the number of constraint $\pi^\top x \leq 1$ may be exponential as the points in $\mathcal{X}$ may be exponential.
Consequently, from a computational perspective, it is not practical to solve the separation problem when all constraints are explicitly expressed. 
For this reason, we use the \emph{row generation method}, an iterative approach that starts with a subset of constraints 
and then dynamically adds other constraints when violations occur \cite{vasilyev2016implementation}.
More specifically, we first choose an initial subset
\begin{equation}
	\mathcal{U}= \left\{v_i \boldsymbol{e}^{i} \, : \, i \in \mathcal{N} \right\} \subseteq \mathcal{X}, \label{eq:initial_set}
\end{equation}
where $\boldsymbol{e}^{i}$ is the $i$-th $\left|\mathcal{N}\right|$-dimensional unit vector, 
and solve the partial separation problem defined by this subset $\mathcal{U}$: 
\begin{equation}
	\omega'(\bar{x}) =\max_{{\pi \in \mathbb{R}^{|\mathcal{N}|}_+}} \left\{ { \bar{x}}^{\top} \pi : 
	 \pi^{\top}x \leq 1,~\forall~{x} \in \mathcal{U} \right\}. \label{eq:partial_sepa_problem}
\end{equation}
Let $\pi^*$ be an optimal solution of \eqref{eq:partial_sepa_problem}.
If $\omega'\left(\bar{x}\right) \leq 1$, we prove  $\bar{x} \in \conv(\mathcal{X})$ (as $\mathcal{U} \subseteq \mathcal{X}$);
otherwise, we check whether ${\pi^*}^\top x \leq 1$ holds for all $x \in \mathcal{X}$ by solving the following bounded knapsack problem
\begin{equation}
	{h}^{*} \in \argmax_{{h}} \left\{ {\pi^{*}}^{\top}h : \forall~{h} \in \mathcal{X}\right\}. \label{eq:knap_problem}
\end{equation}
(i) If ${\pi^{*}}^{\top}{h^{*}} \leq 1$,  ${\pi^{*}}^{\top}x \leq 1$ holds for all $x \in \mathcal{X}$;
(ii) otherwise, we add $h^{*}$ into $\mathcal{U}$ and solve problem \eqref{eq:partial_sepa_problem} again.
The above procedure is iteratively applied until case (i) holds.
The row generation method is summarized in Algorithm \ref{alg:row_generation}. 
\begin{algorithm}[!htbp]
	\caption{The row generation method to solve the separation problem of $\conv(\mathcal{X})$.} 
	\label{alg:row_generation}
	\KwIn{The set $\mathcal{X}$ and a solution ${\bar{x}}$.}
	\KwOut{Find a violated inequality ${\pi^{*}}^{\top}x \leq 1$ separating ${ \bar{x}}$ from $\conv(\mathcal{X})$ 
		or conclude that ${ \bar{x}} \in \conv(\mathcal{X})$.}
	\BlankLine
	Choose an initial subset $\mathcal{U}$ as in \eqref{eq:initial_set};\\
	Solve the partial separation problem \eqref{eq:partial_sepa_problem} to obtain its solution $\pi^*$;\\
	If ${ \bar{x}}^{\top} \pi^{*} \leq 1$, 
	conclude ${\bar{x}} \in \conv(\mathcal{X})$ and stop; 
	otherwise, solve the bounded knapsack problem \eqref{eq:knap_problem} to obtain the solution $h^*$; \\
	If ${\pi^{*}}^{\top}{h^{*}}  > 1$, set $\mathcal{U} \coloneqq \mathcal{U} \cup \{{h}^{*}\}$ and go to step 2;	otherwise, stop and return the violated inequality ${\pi^{*}}^{\top}x \leq 1$;
\end{algorithm}

We remark that problem \eqref{eq:knap_problem} is an integer knapsack problem which is generally NP-hard. 
However, it can be solved by the dynamic programming algorithm in \cite{pisinger2000minimal}, which runs in pseudo-polynomial time but is quite efficient in practice.
In addition, problem \eqref{eq:knap_problem} may have multiple optimal solutions. 
We follow \cite{chen2021exact} to choose a solution with large values $h^*$.
This strategy provides a much stronger inequality ${\pi}^\top h^* \leq 1$ for problem \eqref{eq:partial_sepa_problem}, which effectively decreases the number of iterations in Algorithm \ref{alg:row_generation}. 
For more details, we refer to \cite{chen2021exact}.\\[5pt]
{\noindent $\bullet$~Efficient implementation}\\[5pt] 
\indent In each iteration of Algorithm \ref{alg:row_generation}, we need to solve the \LP problem \eqref{eq:partial_sepa_problem}, which is still time-consuming, especially when the dimension of $\mathcal{X}$ is large.
Below, we introduce two simple techniques to reduce the dimension of $\mathcal{X}$.

First, we can aggregate multiple variables with the same coefficient into a single variable. 
Specifically, suppose that $a_i$, $i \in \mathcal{N}'$, are equal.
Then replacing $\sum_{i \in \mathcal{N}'} x_i $ by a new variable $\delta$ (with $\delta \leq \sum_{i \in \mathcal{N}^{\prime}} v_i$) in  $\mathcal{X}$, we obtain a new set $\mathcal{X}'$.
After constructing a valid inequality for $\mathcal{X}'$, we substitute $\delta = \sum_{i \in \mathcal{N}^{\prime}}x_i$ into this inequality and obtain a valid inequality for $\mathcal{X}$.
In our experience, this simple technique effectively reduces the CPU time spent by Algorithm \ref{alg:row_generation}, especially when most coefficients in $\mathcal{X}$ are equal.
The second technique to reduce the dimension of $\mathcal{X}$ comes from Vasilyev et al. \cite{vasilyev2016implementation}, which consists of two steps.
In the first step, the separation problem over a projected polytope $\conv(\mathcal{X}({\bar{x}}))$ is solved to obtain a valid inequality (violated by $\bar{x}$):
\begin{equation}
	 \label{valid_inequality}
	\sum_{i \in \mathcal{N}_{\rm R}}\pi_{i}x_{i} 
	\leq 1,
\end{equation}
 where 
\begin{align}
	\mathcal{X}({\bar{x}})
	=\left\{{x}\in \mathbb{Z}_{+}^{\left|\mathcal{N}_{\rm R}\right|} : 
	\sum_{i \in \mathcal{N}_{\rm R}  } a_{i}x_{i} \leq \bar{b},~x_{i} \leq v_{i},~\forall~i \in \mathcal{N}_{\rm R} \right\},
\end{align}
$\mathcal{N}_{\rm L}=\left\{i \in \mathcal{N}:\bar{x}_{i}=0\right\}$, 
$\mathcal{N}_{\rm U}=\left\{i \in \mathcal{N}:\bar{x}_{i}=v_{i}\right\}$, 
$\mathcal{N}_{\rm R}=\mathcal{N} \backslash (\mathcal{N}_{\rm L} \cup \mathcal{N}_{\rm U} )$, and $\bar{b}=b - \sum_{i \in \mathcal{N}_{\rm U}}a_{i}v_{i}$.
It can be expected that the separation problem over $\conv(\mathcal{X}({\bar{x}}))$ is easier to solve than that over $\conv(\mathcal{X})$, especially when the number of fixed variables $|\mathcal{N}_{\text{L}}|+|\mathcal{N}_{\text{U}}|$ is large. 
In the second step, we derive a valid inequality 
\begin{align}
	&\sum_{i \in \mathcal{N}_{\rm R}}\pi_{i}x_{i} 
	+ \sum_{i \in \mathcal{N}_{\rm L}} {\pi}_{i}x_{i}
	+ \sum_{i \in \mathcal{N}_{\rm U}} {\pi}_{i}x_{i}
	\leq 1 + \sum_{i \in \mathcal{N}_{\rm U}} {\pi}_{i}v_{i}, \label{eq:lift_valid_inequality}
\end{align}
for $\conv(\mathcal{X})$ using \emph{sequential lifting}; see, e.g.,  \cite{kaparis2010separation,vasilyev2016implementation,gu1998lifted,gu2000sequence,zemel1989easily} for a detailed discussion of sequential lifting.
\section{Numerical results}
\label{sect:numerical_results}

In this section,  we present simulation results to illustrate the effectiveness and efficiency of 
the proposed formulation \eqref{eq:mathforms} and the proposed \CSES algorithm for solving \VMCPs.
More specifically, we first perform numerical experiments to compare the performance of solving the proposed formulation \eqref{eq:mathforms} and the two existing formulations in \cite{speitkamp2010mathematical} and \cite{mazumdar2017power} by standard \MIP solvers. 
Then, we present some simulation results to demonstrate the efficiency of the proposed \CSES algorithm 
for solving \VMCPs over standard \MIP solvers.
Finally, we evaluate the performance of our proposed \CS algorithm under different problem parameters.

In our implementation, the proposed \CSES was implemented in C++ linked with IBM ILOG \cplex optimizer 20.1.0 \cite{CPLEX}.
The time limit and relative gap tolerance were set to 7200 seconds and 0\%, respectively, in all experiments.
The cutting plane approach was stopped if the optimal value of the \LP relaxation problem of \VMCP is improved 
by less than 0.05\% between two adjacent calls.
Unless otherwise specified, all other \cplex parameters were set to their default values.
All experiments were performed on a cluster of Intel(R) Xeon(R) Gold 6140 @ 2.30GHz computers, with 192 GB RAM, running Linux (in 64 bit mode).

\subsection{Testsets}
We tested all algorithms on problem instances with 5 \VM types and 10 server types with different features (CPU, RAM, and Bandwidth resources and power consumption), as studied in \cite{mazumdar2017power}. 
The \VM types and server types are in line with industry standards and are described in Tables \ref{table:first_dataset_vm} and \ref{table:first_dataset_server}, respectively.

The basic \VMCP \eqref{eq:mathforms} instances are constructed using the same procedure in \cite{mazumdar2017power}.
Specifically, each instance has an equal number of servers of each type.
The number of servers $|\KSET|$ is selected from $\{ 250, 500, 750,1000 \}$.
For each server $k$, we iteratively assign a uniform random number $\NIK$ (satisfying $\NIK \in \{0,\ldots,\lfloor\frac{\SKR}{\UIR}\rfloor\}$) of \VMs of type $i$ until the maximum usage of the available resource (CPU, RAM, and Bandwidth) load $\sigma_{k}$, defined by,
\begin{equation}
	\sigma_{k}=\max\biggl\{\frac{\sum_{i \in \ISET}\UIR \NIK}{\SKR}: \forall~r \in \RSET \biggr\}, \label{eq:generation_instance}
\end{equation}
exceeds a predefined value $\alpha$.
In general, the larger the $\alpha$, the more \VMs will be constructed.
We choose $\alpha \in \{20\%, 40\%\}$.
In our test, we attempt to minimize the total power consumption of the servers.
As shown in  \cite{mazumdar2017power,gao2013multi,wu2016energy}, the power consumption of a server can be represented by the following linear model:
\begin{align} 
	\label{eq:power_consumption_server}
	P_{k}=P_{{\rm idle},k} +\left(P_{{\rm max},k}-P_{{\rm idle},k}\right)U_{k},
\end{align}
where $P_{{\rm idle},k}$ is the idle power consumption (at the idle state) of server $k$, 
$P_{{\rm max},k}$ is the maximum power consumption (at the peak state) of server $k$, and 
$U_{k}$ ($U_{k}\in \left[0,1\right]$) is the CPU utilization of server $k$.
As such, (i) the activation cost $\CRUN$ is set to the idle power consumption of server $k$, which is equal to 60\% of the maximum power consumption, as assumed by \cite{mazumdar2017power}; and (ii) the allocation cost $\CASSIGN$ is set to $\left(P_{{\rm max},k}-P_{{\rm idle},k}\right)\frac{u_{i,{\rm CPU}}}{s_{k,{\rm CPU}}}$.
As illustrated in Section \ref{sect:problem_formualtion}, $\CMIG$ is set to $\CASSIGN$ in the experiments.

The extended \VMCP (i.e., problem \eqref{eq:mathforms} with constraints \eqref{eq:new_imcoming_vms}-\eqref{eq:attribute1}) instances are constructed based on the basic \VMCP instances described above with the parameters for constraints \eqref{eq:new_imcoming_vms}-\eqref{eq:attribute1} described as follows.
{For constraints  \eqref{eq:new_imcoming_vms} and  \eqref{eq:transformed_resource}, parameter $\DINEW$ is obtained by setting existing \VMs as new incoming \VMs with a probability $\beta$, chosen in \{35\%, 45\%\}.
}
For constraint \eqref{eq:limit_number_vm_migrations}, parameter $\L$ is set as  $\eta \sum\limits_{i \in \ISET}\sum \limits_{k \in \KSET}\NIK $ where $\eta$ is chosen in $\left\{30\%, 40\% \right\}$.
{For constraints \eqref{eq:max_number_vms} or \eqref{eq:max_number_vms1}, the maximum number of \VMs on each server $k$, $\MK$, is set to $\lambda \max\limits_{i \in \ISET} \min \limits_{r \in \RSET} \lfloor \frac{\SKR}{\UIR} \rfloor$, where 
$\lambda \in \left\{85\%, 90\%\right\}$.
Notice that $\min \limits_{r \in \RSET} \lfloor\frac{\SKR}{\UIR} \rfloor$ is the maximum number of \VMs of type $i$ that can be allocated to a given server $k$, and hence $\max\limits_{i \in \ISET} \min \limits_{r \in \RSET} \lfloor\frac{\SKR}{\UIR}\rfloor$ is the maximum number of \VMs of a single type that can be allocated to a given server $k$.}
For constraints \eqref{eq:attribute} and \eqref{eq:attribute1}, 
we randomly choose the element $k \in \KSET$ to subset $\KSET(i)$, $i \in \ISET$, with a probability $\theta$, chosen in $\{5\%, 10\%\}$.

For each fixed $|\mathcal{K}| \in \{250,500,750,1000\}$ and $\alpha\in \{20\%, 40\%\}$, 50 basic \VMCP instances are randomly generated, leading to an overall 400 basic \VMCP instances testbed.
In addition, for each fixed $|\mathcal{K}| \in \{250,500,750,1000\}$, $\alpha\in \{20\%,40\%\}$, $\beta \in \{35\%, 45\%\}$, $\eta \in \{30\%, 40\%\}$, $\lambda \in \{85\%, 90\%\}$, and $\theta \in \{5\%, 10\%\}$, 10 extended \VMCP instances are randomly generated, leading to an overall $1280$ extended \VMCP instances testbed.

\begin{table}[t]
	\centering
	\setlength{\tabcolsep}{10pt}
	\renewcommand{\arraystretch}{1.1}
	\caption{The five \VM types.}
	\label{table:first_dataset_vm}
	\small
	\begin{tabular}{cccc}
		\hline
		Type&CPU&RAM (GB)&Bandwidth (Mbps)\\
		\hline
		\VM 1&1&1&10 \\
		\VM 2&2&4&100\\
		\VM 3&4&8&300\\
		\VM 4&6&12&1000\\
		\VM 5&8&16 &1200\\
		\hline
	\end{tabular}
\end{table}

\begin{table}[t]
	\centering
	\setlength{\tabcolsep}{5pt}
	\renewcommand{\arraystretch}{1.1}
	\caption{The ten server types.}
	\label{table:first_dataset_server}
	\small
	\begin{tabular}{ccccc}
		\hline
		\multirow{1.5}{*}{Type}&\multirow{1.5}{*}{CPU}&\makecell*[c]{RAM\\(GB)} &\makecell*[c]{Bandwidth\\(Mbps)}&\makecell*[c]{Maximum power \\consumption (W)}\\
		\hline
		Server 1&4&8 &1000 &180\\
		Server 2&8&16 & 1000&200\\
		Server 3&10&16&2000&250\\
		Server 4&12&32&2000&250\\
		Server 5&14&32&2000&280\\
		Server 6&14&32&2000&300\\
		Server 7&16&32&4000&300\\
		Server 8&16&64&4000&350\\
		Server 9&18&64&4000&380\\
		Server 10&18&64&4000&410\\
		\hline
	\end{tabular}
	\vspace{-15pt}
\end{table}
\subsection{Efficiency of the proposed formulation}
\label{sect:comparison_with_other_two_mathematical_formulations}
In this subsection, we present computational results to illustrate the efficiency of the proposed formulation
\eqref{eq:mathforms} over those in \cite{speitkamp2010mathematical} and \cite{mazumdar2017power} (i.e., formulations \eqref{eq:mathforms_sb} and \eqref{eq:mathforms_int}).
Table \ref{table:comparison_results_three_model_5} summarized  the computational results of the three formulations solved by \cplex.
We report the number of instances that can be solved to optimality witnin the given time limit (\Solved), the average CPU time (\TT), and the average number of constraints and variables (\NC and \NV, respectively).  

\begin{table*}[htbp]
	\centering
	\begin{threeparttable}[b]
		\small
		\caption{Comparison results of the proposed formulation \eqref{eq:mathforms} with existing formulations in \cite{speitkamp2010mathematical} and \cite{mazumdar2017power}}
		\label{table:comparison_results_three_model_5}
		\renewcommand{\arraystretch}{1.1}
		\setlength{\tabcolsep}{4.5pt}
		\begin{tabular}{llllllllllllll}
			\hline
			\multirow{3}{*}{$\left(\left|\mathcal{K}\right|, \alpha\right)$}& \multirow{3}{*}{\NVM} &\multicolumn{4}{c}{Proposed formulation \eqref{eq:mathforms}}& \multicolumn{4}{c}{\makecell*[c]{Formulation in \\Speitkamp and Bichler \cite{speitkamp2010mathematical}}} &\multicolumn{4}{l}{\makecell*[c]{Formulation in \\Mazumdar and Pranzo \cite{mazumdar2017power}}} \\
			\cmidrule(r){3-6}\cmidrule(r){7-10}\cmidrule(r){11-14}
			& &\Solved & \TT & \NC & \NV&\Solved & \TT & \NC & \NV&\Solved & \TT & \NC & \NV \\
			\hline
			(250, 20\%) & 959& \textbf{50} & \textbf{1.5}  & 2005  & 2750&46 & 905.8 & 241459     & 479750&45 & 574.3 & 2000     & 312750 \\
			(250, 40\%) & 1163& \textbf{49} & \textbf{2.9} & 2005 & 2750 &37 & 1664.7& 292663     & 581750& 37 & 870.7 & 2000    & 312750 \\
			(500, 20\%) &1924& \textbf{50} & \textbf{3.9} & 4005 & 5500& 20 & 5179.7& 965424     & 1924500& 20 & 3782.5& 4000    & 1250500  \\
			(500, 40\%) &2310& \textbf{45} & \textbf{12.5}  & 4005 & 5500 & 8  & 5266.9& 1158810  & 2310500 &13 & 4302.6& 4000    & 1250500 \\
			(750, 20\%) &2855 & \textbf{41} & \textbf{38.4}  & 6005 & 8250 & 1  & 6996.1& 2146355    & 4283250& 1  & 6962.0& 6000        & 2813250    \\
			(750, 40\%) & 3493 & \textbf{31} & \textbf{108.7} & 6005  & 8250& 1  & 6998.3& 2625493     & 5240250& 3  & 6825.0& 6000        & 2813250  \\
			(1000, 20\%)&3832 & \textbf{36} & \textbf{67.7}  & 8005 & 11000& 0  & -     &3838832     & 7665000& 0  & -     &8000     & 5001000   \\
			(1000, 40\%)&4655 & \textbf{23} & \textbf{316.2} & 8005  & 11000 &2  & 7038.0& 4662655     & 9311000& 1  & 7198.1& 8000      & 5001000  \\
			\hline
			All & &\textbf{325}&\textbf{19.5}&&&115&4227.8& && 120&3449.4& & \\ 
			\hline& 
		\end{tabular}
			\vspace{-15pt}
		\begin{tablenotes}
			\item In the table, ``-'' means that the average CPU time reaches the time limit.
		\end{tablenotes}
	\vspace{-15pt}
\end{threeparttable}
\end{table*}

As expected, (i) the numbers of variables and constraints in the proposed formulation \eqref{eq:mathforms} are much smaller than those in \cite{speitkamp2010mathematical}; and (ii) the number of variables in the proposed formulation \eqref{eq:mathforms} is also much smaller than the one in \cite{mazumdar2017power}, but the numbers of constraints in the two formulations are fairly equal.
Consequently, it can be clearly seen that it is much more efficient to solve formulation \eqref{eq:mathforms} than those in \cite{speitkamp2010mathematical} and \cite{mazumdar2017power}.
More specifically, using the proposed formulation \eqref{eq:mathforms}, 325 instances (among 400 instances) can be solved to optimality. 
In sharp contrast, using the formulations in \cite{speitkamp2010mathematical} and \cite{mazumdar2017power}, only 115 and 120 instances can be solved to optimality, respectively. 
Indeed, for large-scale cases (e.g., $|\mathcal{K}|=750, 1000$), only a few instances can be solved to optimality by using the two existing formulations in \cite{speitkamp2010mathematical} and \cite{mazumdar2017power}.
Moreover, as observed in Table \ref{table:comparison_results_three_model_5}, compared with the formulations in \cite{speitkamp2010mathematical} and \cite{mazumdar2017power}, the CPU time taken by solving the proposed formulation \eqref{eq:mathforms} is much smaller (19.5 seconds versus 4227.8 seconds and 3449.4 seconds).
From this computational result, we can conclude that formulation \eqref{eq:mathforms} significantly outperforms the formulations in \cite{speitkamp2010mathematical} and \cite{mazumdar2017power} in terms of solution efficiency.  

\subsection{Efficiency of the proposed $\CS$ algorithm}
\label{sect:efficiency_of_cses}
In this subsection, we compare the performance of the proposed \CSES algorithm with the approach using the \MIP solver \cplex (called \DEF).
In addition, to address the advantage of embedding the proposed relaxation \eqref{eq:mathforms_lp_es} into the \CS algorithm, 
we compare \CSES with \CSP algorithm, in which the LP relaxation \eqref{eq:mathforms_lp} is used, to solve \VMCPs. 

\begin{figure}[htbp]
	\centering
	\includegraphics[width=0.8\linewidth]{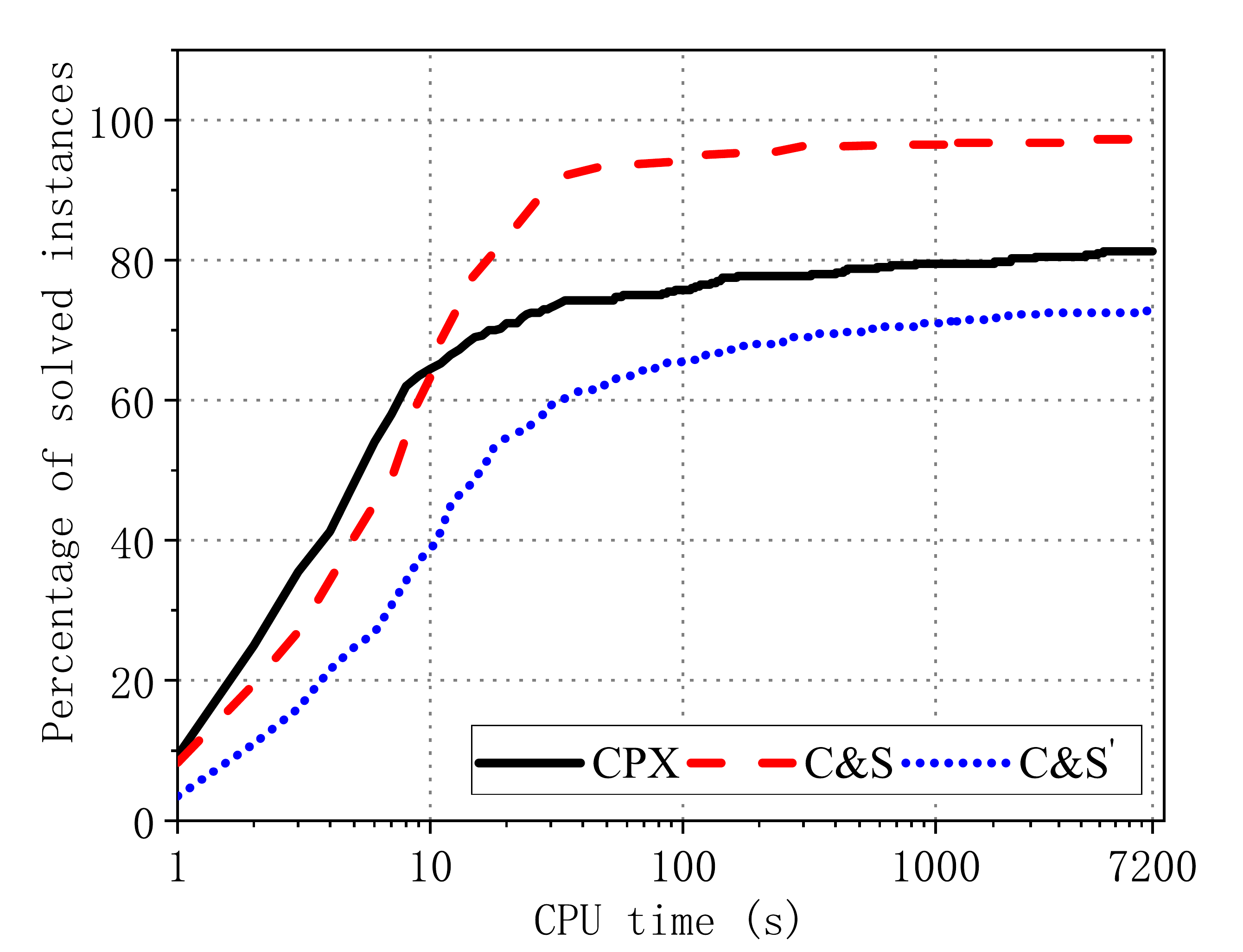}
	\caption{Comparison of the CPU time between \DEF, \CSES, and \CSP on basic \VMCPs.}  
	\label{fig:time-all-1-1}
\end{figure}

Figs. \ref{fig:time-all-1-1}-\ref{fig:time-all-2-1} plot the  performance profiles of the three settings \CPX, 
\CSES, and \CSP.
Each point with coordinates $(a, b)$ in a line represents that for $b\%$ of the instances, 
the CPU time is less than or equal to $a$ seconds.
From Figs. \ref{fig:time-all-1-1}-\ref{fig:time-all-2-1}, \DEF can solve more basic \VMCP instances within 10 seconds 
and more extended \VMCP instances within 5 seconds, respectively. 
This shows that \DEF performs a bit better than \CSES for easy instances.
However, for the hard instances, \CSES significantly outperforms \DEF, especially for basic \VMCPs. 
In particular, \CSES can solve 97\% basic \VMCP instances to optimality while \DEF can solve only 81\% basic \VMCP instances to optimality.
\begin{figure}[t]
	\centering
	\includegraphics[width=0.8\linewidth]{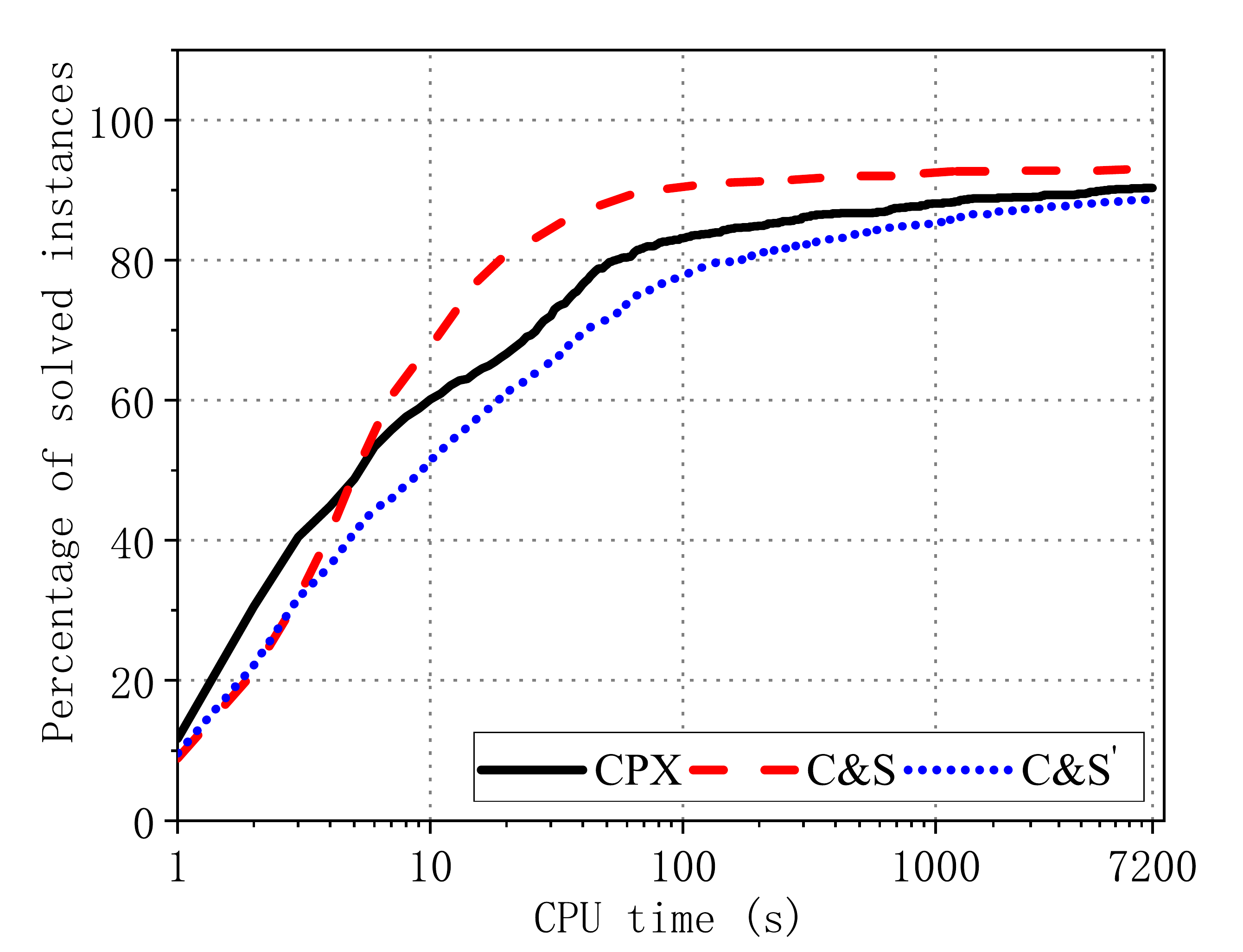}
	\caption{Comparison of the CPU time between \DEF, \CSES, and \CSP on extended \VMCPs.}
	\label{fig:time-all-2-1}
\end{figure}
\begin{figure}[t]
	\centering
	\includegraphics[width=0.8\linewidth]{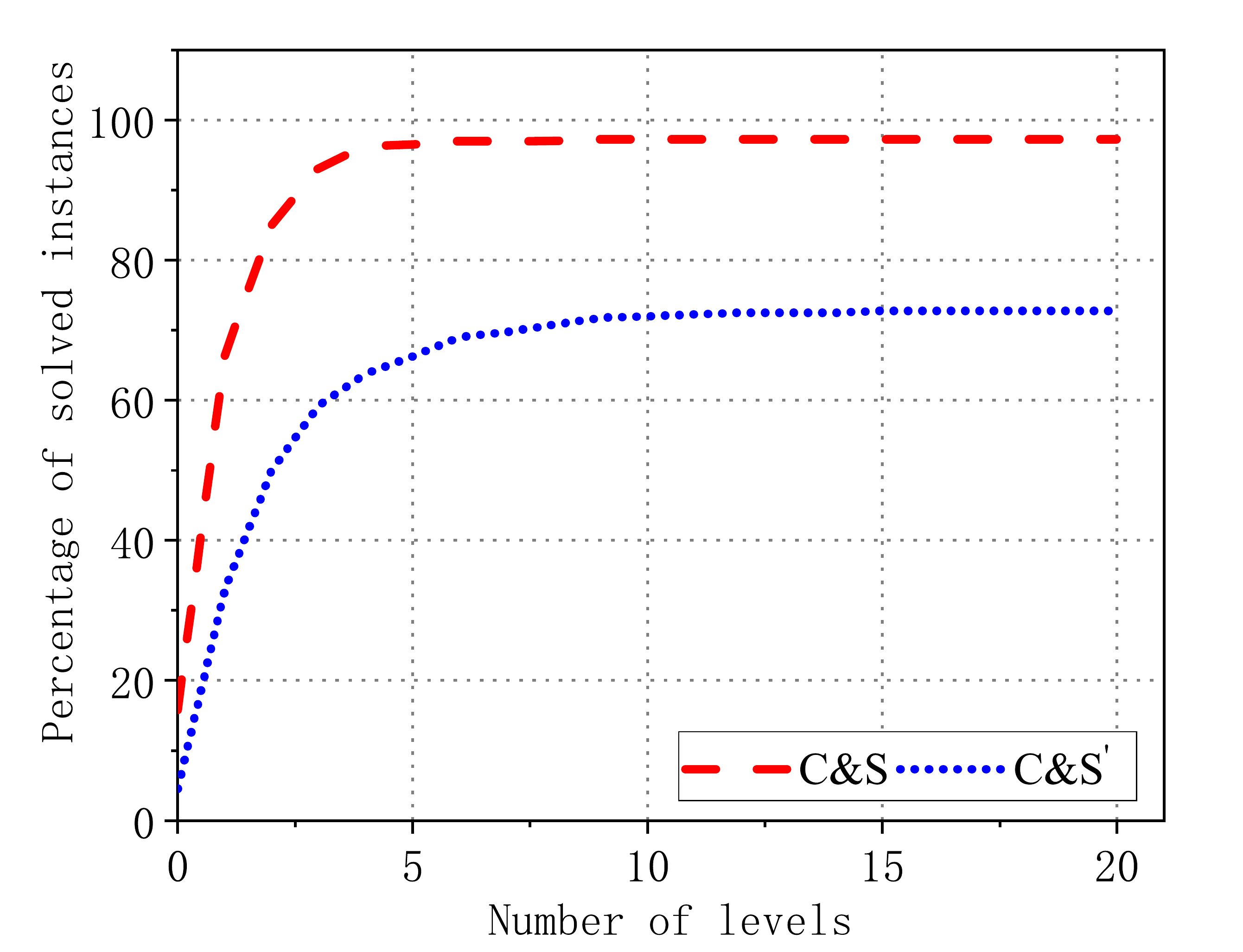}
	\caption{Comparison of the number of levels of the search tree between \CSES and \CSP on basic \VMCPs.} 
	\label{fig:level-all-1-1}
\end{figure}
In addition, from the two figures, we can conclude that the performance of 
\CSES is much better than \CSP for basic and extended \VMCPs. 
This indicates that the proposed compact relaxation \eqref{eq:mathforms_lp_es} has a significantly positive performance impact on the \CS approach.

To gain more insight into the computational efficiency of \CSES over \CSP, 
we compare the numbers of levels of the cut-and-solve search trees returned by \CSES and \CSP.
\begin{figure}[t]
	\centering
	\includegraphics[width=0.8\linewidth]{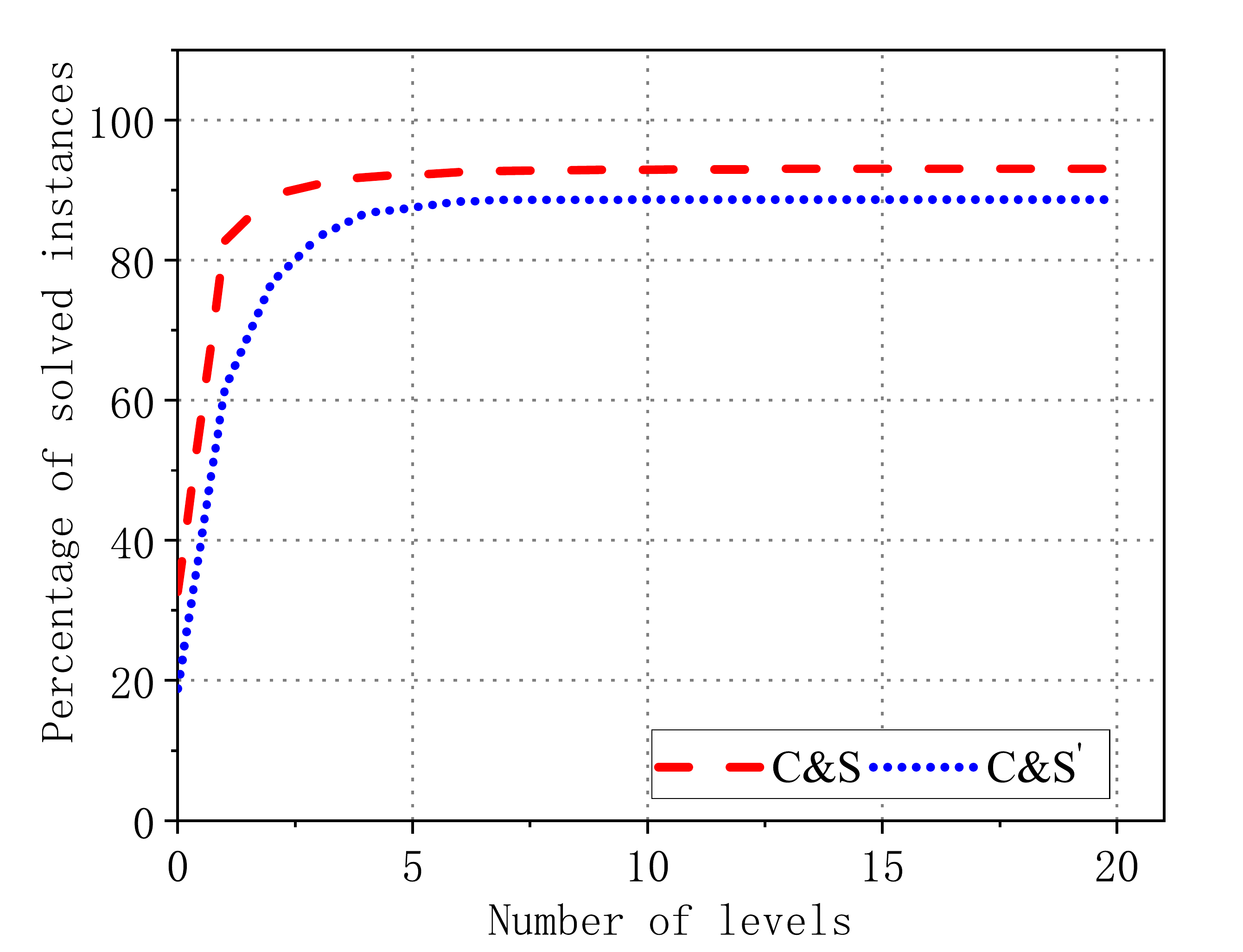}
	\caption{Comparison of the number of levels of the search tree between \CSES and \CSP on extended \VMCPs.}
	\label{fig:level-all-2-1}
\end{figure}
The results for the basic and extended \VMCP instances are summarized in Figs.  \ref{fig:level-all-1-1} and \ref{fig:level-all-2-1}, respectively.
From the two figures, we can conclude that the number of levels returned by \CSES is much less than that returned by \CSP, 
especially for basic \VMCP instances. 
More specifically, more than 97\% of the basic \VMCP instances can be solved by \CSES within 5 levels 
while only about 70\% of the basic \VMCP instances can be solved by \CSP within 20 levels.
This shows the advantage of the proposed relaxation \eqref{eq:mathforms_lp_es}, i.e., it can effectively reduces the \CS search tree size.

From the above computational results, we can conclude that 
(i) the proposed \CSES algorithm is much more effective than standard \MIP solver, especially for the hard instances; 
(ii) the proposed relaxation \eqref{eq:mathforms_lp_es} can effectively reduce the size of the \CS search tree, 
which plays a crucial role in the efficiency of the proposed \CSES algorithm. 

\subsection{Performance comparison of the proposed \text{\rm{\CSES}} algorithm}
\label{sect:performance_comparison}
To gain more insights into the performance of the proposed \CSES algorithm,  we compare the performance of \CSES on instances with different numbers of servers $\left|\mathcal{K}\right|$ and different loads $\alpha$ (the higher the load $\alpha$, the larger the number of \VMs).
\begin{figure}[t]
	\centering
	\includegraphics[width=0.8\linewidth]{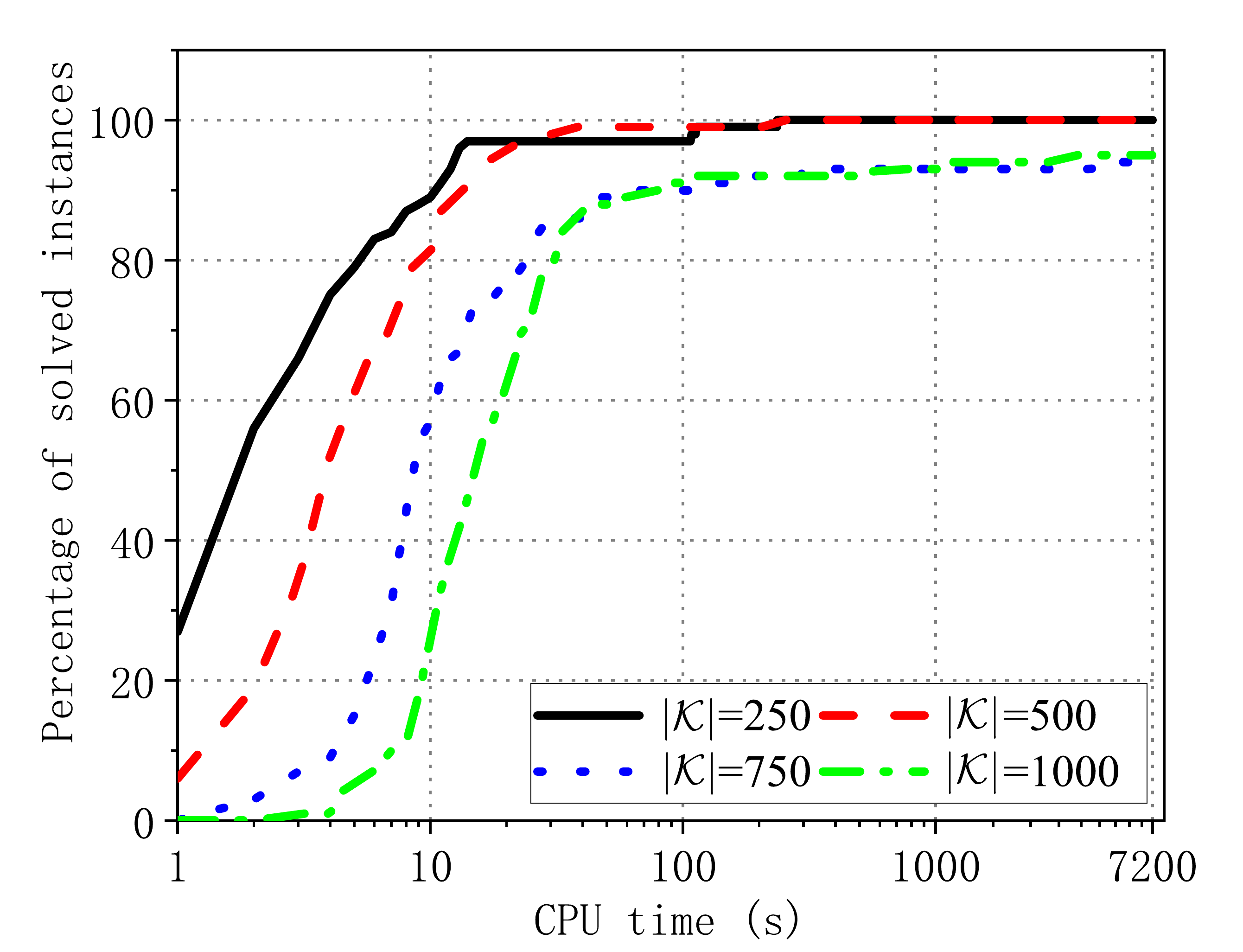}
	\caption{Comparison of the CPU time for \CSES on basic \VMCPs with different numbers of servers $\left|\mathcal{K}\right|$.} 
	\label{fig:time-k-1-1}
\end{figure}
\begin{figure}[t]
	\centering
	\includegraphics[width=0.8\linewidth]{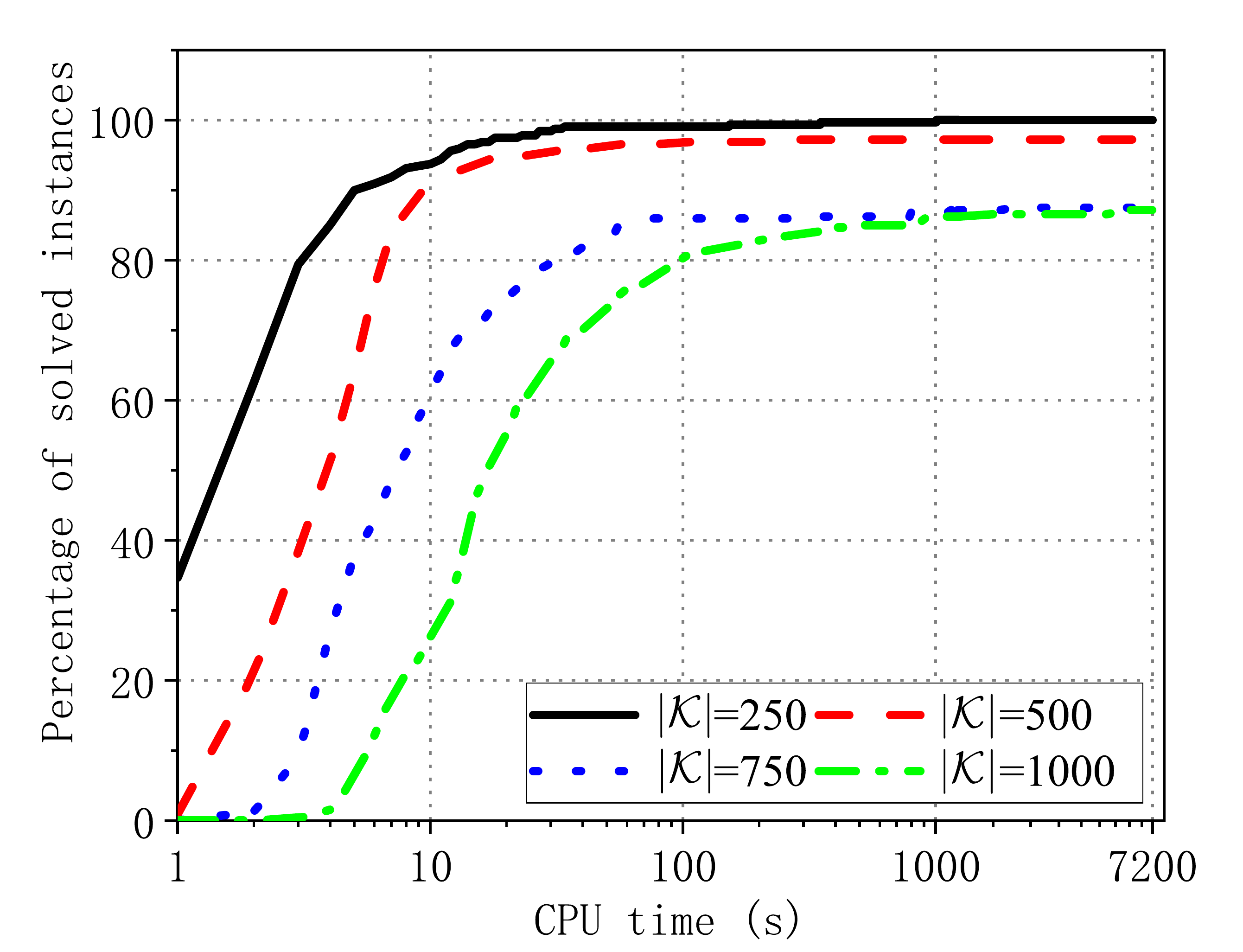}
	\caption{Comparison of the CPU time for \CSES on extended \VMCPs with different numbers of servers $\left|\mathcal{K}\right|$.} 
	\label{fig:time-k-2-1}
\end{figure}

Figs. \ref{fig:time-k-1-1} and \ref{fig:time-k-2-1} plot performance profiles of CPU time, grouped by the number of servers $\left|\mathcal{K}\right|$, for the basic and extended \VMCPs, respectively.
As expected,  the CPU time of \CSES generally increases with the number of servers $\left|\mathcal{K}\right|$ for both basic and extended \VMCPs.
This is reasonable as the problem size and the search space grow with the number of servers. 
Nevertheless, even for the largest case ($|\KSET|=1000$), 
\CSES can still solve 95\% of basic \VMCP instances and 87\% of extended \VMCP instances to optimality, respectively, which shows the scalability of the proposed \CSES with the increasing number of servers.
\begin{figure}[!htbp]
	\centering
	\includegraphics[width=0.8\linewidth]{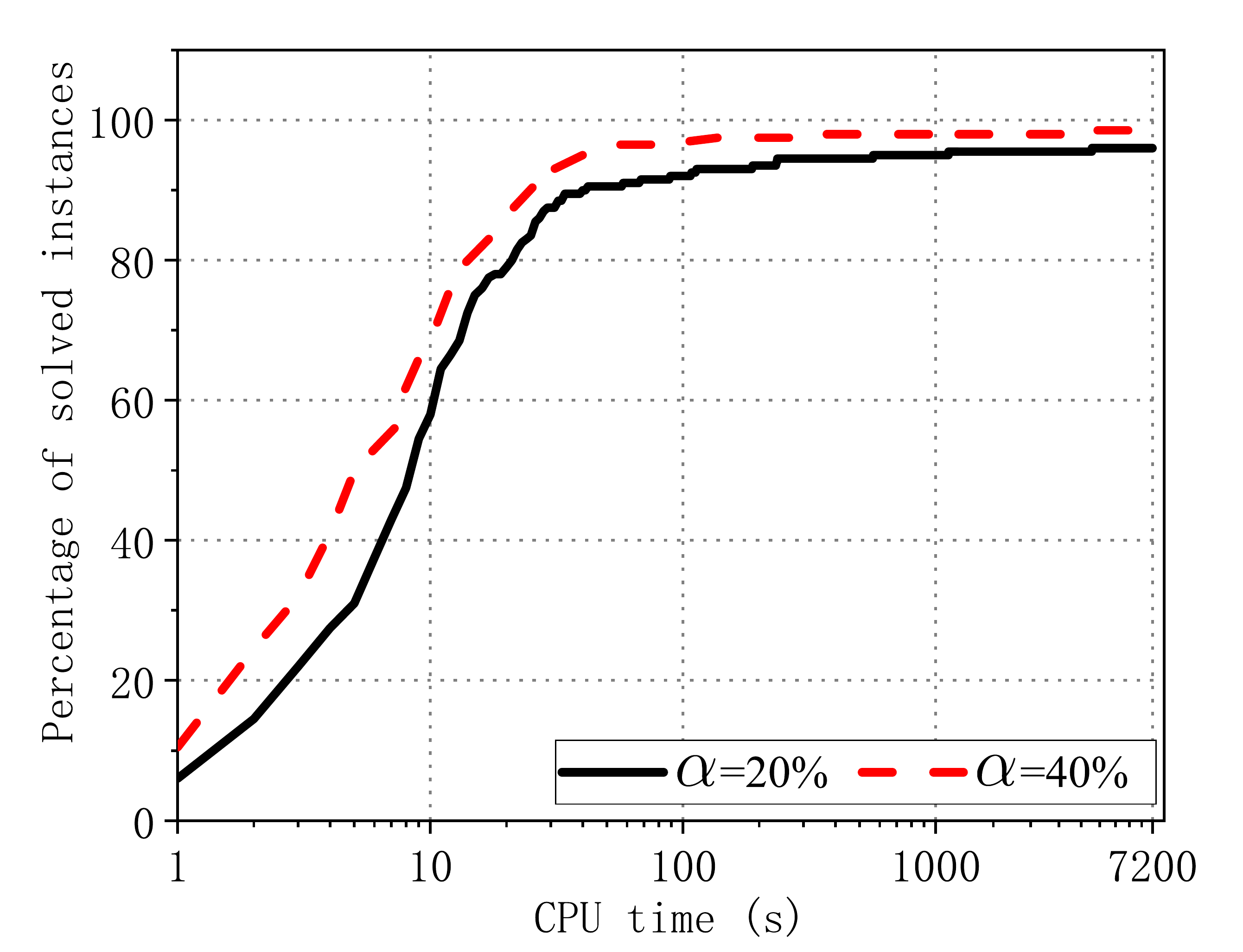}
	\caption{Comparison of the CPU time for \CSES on basic \VMCPs with different loads $\alpha$.} 
	\label{fig:time-beta-1-1}
\end{figure}

Next, we compare the results for basic and extended \VMCPs with loads $\alpha=20\%$ and $\alpha=40\%$.
The results for basic and extended \VMCPs are summarized in Figs. \ref{fig:time-beta-1-1} and \ref{fig:time-beta-2-1}, respectively.
We observe that the CPU time does not increase with the increasing value of $\alpha$ for basic \VMCPs.
Even in extended \VMCPs, the CPU time of solving instances with load $\alpha=40\%$ is only slightly larger than that of solving instances with load $\alpha=20\%$.
However, the same behavior cannot be observed in the computational results returned by \CPX (as illustrated in Table \ref{table:comparison_results_three_model_5}, the CPU time of using \CPX to solve the basic \VMCPs with $\alpha=20\%$ is smaller than that of using \CPX to solve the basic \VMCPs with $\alpha=40\%$).
This shows another advantage of the proposed \CSES, i.e., a higher load $\alpha$ does not lead to a larger CPU time for solving \VMCPs.
\begin{figure}[!htbp]
	\centering
	\includegraphics[width=0.8\linewidth]{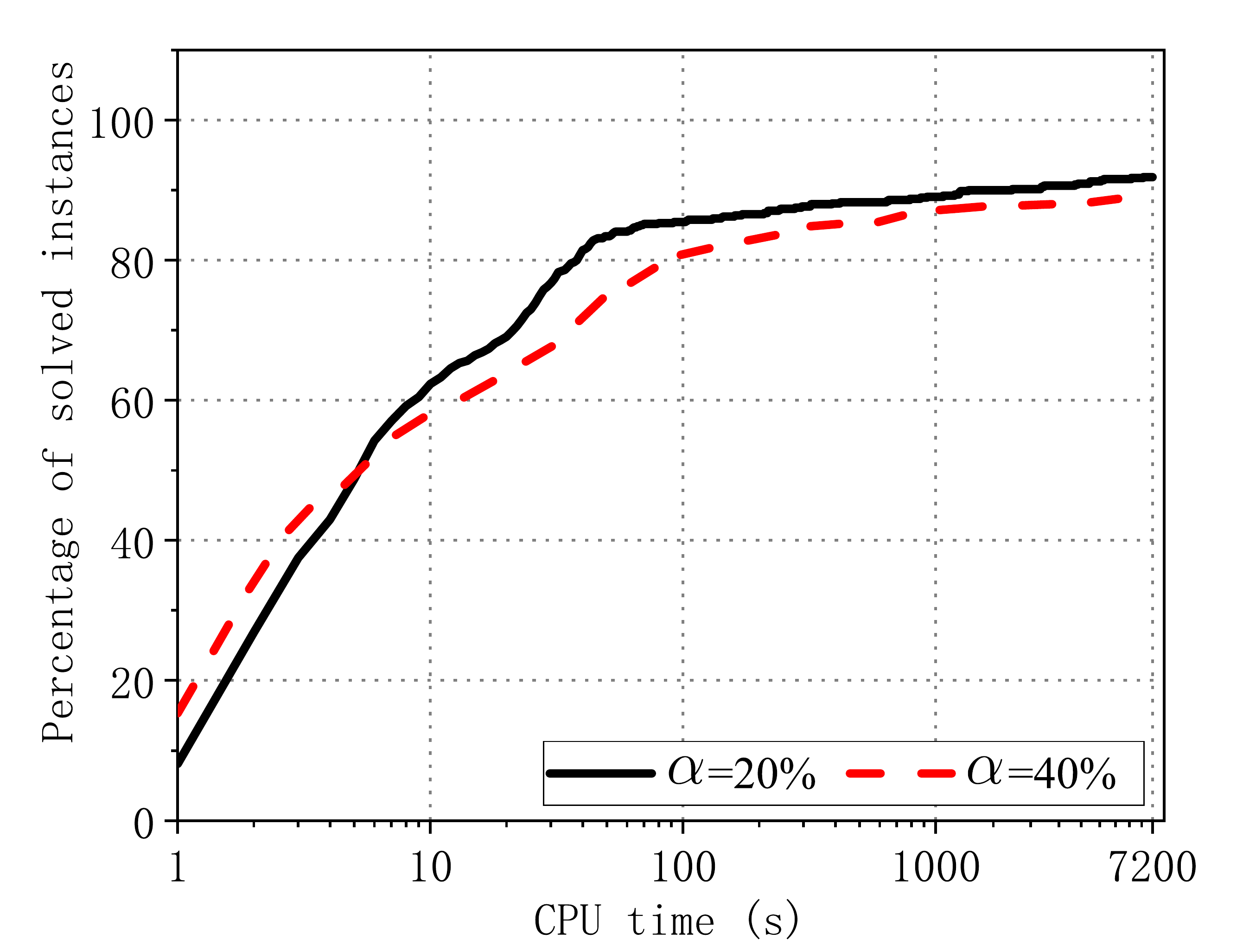}
	\caption{Comparison of the CPU time for \CSES on extended \VMCPs with different loads $\alpha$.} 
	\label{fig:time-beta-2-1}
\end{figure}
\section{Conclusion and remarks}
\label{sect:conclusion_remarks}

In this paper, we have proposed new problem formulations for \VMCP which minimized the summation of server activation, \VM allocation, and migration costs subject to the resource constraints of the servers and other practical constraints.
Compared with existing formulations in 
Speitkamp and Bichler \cite{speitkamp2010mathematical} and Mazumdar and Pranzo \cite{mazumdar2017power} that suffer from large problem sizes due to the 3-index variables, 
the proposed formulation uses the 2-index variables, making a much smaller problem size.
We have developed a cut-and-solve algorithm to solve the new formulations of \VMCPs to optimality.
The proposed algorithm is based on a newly proposed relaxation, which compared with the natural \LP relaxation, is much more compact in terms of providing a better relaxation bound, making it suitable to solve large-scale \VMCPs.
Extensive computational results demonstrate that (i) the proposed formulation significantly outperforms existing formulations in terms of solution efficiency; and (ii) compared with standard \MIP solvers, the proposed \CS algorithm is much more efficient.

\label{sect:appendix}
\begin{proof}
[Proof of Proposition \ref{th:non_trivial_facets}]
	As $a_i > 0$ and $ b> 0$, $\mathcal{X}$ is an independent system and thus every facet-defining inequality of $\conv(\mathcal{X})$, except $x_i \geq 0$, 
	is of the form $\sum_{i \in \mathcal{N}}\pi_{i}x_{i} \leq \pi_{0}$ with $\pi_i \geq 0$ ($i \in \mathcal{N}$) and $\pi_0 > 0$; see \cite[Page 237]{nemhauser1988integer}.
	In addition, $\mathcal{X}_y$ can be transformed to
	$\mathcal{X}_{y^{\prime}}=\left\{(x,y^{\prime})\in \mathbb{Z}_{+}^{\left|\mathcal{N}\right|} \times \left\{0,1\right\}:\sum_{i \in \mathcal{N}}a_{i}x_{i}+by^{\prime} \leq b, x_{i}\leq v_{i},~\forall~i \in \mathcal{N}\right\}$ 
	by replacing variable $y$ with  $1-y' \in \{0,1\}$. 
	Similarly, $\mathcal{X}_{y^{\prime}}$ is an independent system, and thus every facet-defining inequality, except $x_i \geq 0$ and $y'\geq 0$, is of the form $\sum_{i \in \mathcal{N}}\pi_{i}x_{i} + \pi_{0} y' \leq \alpha$ with $\pi_i \geq 0$ ($i \in \mathcal{N}$), $\pi_0 \geq 0$, and $\alpha > 0$.
	This implies that all facet-defining inequalities for $\text{conv}(\mathcal{X}_{y})$ except $x_{i} \geq 0$ and $y \leq 1$ are of the form 
	$\sum_{i \in \mathcal{N}}\pi_{i}x_{i}\leq \pi_{0}y + \alpha - \pi_0$ 
	where $\pi_{i} \geq 0$ ($i \in \mathcal{N}$), $\pi_{0} \geq 0$, and $\alpha > 0$.
	Since $(\boldsymbol{0}, 0) \in \mathcal{X}_y$, we have $\alpha - \pi_0\geq 0$.
	If $\alpha - \pi_0 >0$, inequality $\sum_{i \in \mathcal{N}}\pi_{i}x_{i}\leq \pi_{0}y +\alpha-\pi_0$ can be strengthened to $\sum_{i \in \mathcal{N}}\pi_{i}x_{i}\leq  \alpha y$ and thus cannot be facet-defining for $\conv(X_y)$.
	Consequently, we must have $\pi_0=\alpha > 0$. 
	Next, we shall complete the proof by showing that $\sum_{i \in \mathcal{N}}\pi_{i}x_{i}\leq \pi_{0}y$ with $\pi_i \geq 0$ ($i \in \mathcal{N}$) and $\pi_0 > 0$ (differing from $y \geq 0$) is facet-defining for $\conv(\mathcal{X}_y)$ if and only if $\sum_{i \in \mathcal{N}}\pi_{i}x_{i}\leq \pi_{0}$ with $\pi_i \geq 0$ ($i \in \mathcal{N}$) and $\pi_0 > 0$ is facet-defining for $\conv(\mathcal{X})$.
	
	Suppose that $\sum_{i \in \mathcal{N}}\pi_{i}x_{i} \leq \pi_{0}y$ with $\pi_i \geq 0$ ($i \in \mathcal{N}$) and $\pi_0 > 0$ (differing from $y \geq 0$) is facet-defining for $\conv(\mathcal{X}_y)$.
	Then (i) $\sum_{i \in \mathcal{N}}\pi_{i}x_{i} \leq \pi_{0}$ is valid for $\mathcal{X}$ (as $(x,1) \in \mathcal{X}_y$ if and only if $x \in \mathcal{X}$); 
	and (ii) there must exist $\left|\mathcal{N}\right|+1$ affinely independent points  $(x^\ell,y^\ell ), \ell =1,\cdots,\left|\mathcal{N}\right|+1$, in $\mathcal{F}_y=\left\{(x,y) \in \conv(\mathcal{X}_{y}): \sum_{i \in \mathcal{N}}\pi_{i}x_{i} = \pi_{0}y\right\}$.
	As $(\boldsymbol{0},0)$ is the only point in $\mathcal{F}$ satisfying $y=0$ and $\sum_{i \in \mathcal{N}}\pi_{i}x_{i} \leq \pi_{0}y$ differs from $y\geq 0$, such $|\mathcal{N}|+1$ points must be $({0},0)$ and $({x}^{\ell},1)$, $\ell = 1,\cdots,\left|\mathcal{N}\right|$.
	Apparently,  points ${x}^{\ell}$, $\ell = 1,\cdots,\left|\mathcal{N}\right|$, must be affinely independent and in $\mathcal{F}=\left\{x \in \conv(\mathcal{X}): \sum_{i \in \mathcal{N}}\pi_{i}x_{i} = \pi_{0}\right\}$.
	Therefore, $\sum_{i \in \mathcal{N}}\pi_{i}x_{i} \leq \pi_{0} $ is facet-defining for $\conv(\mathcal{X})$.
	
	Now suppose that $\sum_{i \in \mathcal{N}}\pi_{i}x_{i} \leq \pi_{0}$ with $\pi_i \geq 0$ ($i \in \mathcal{N}$) and $\pi_0 > 0$ is facet-defining for $\conv(\mathcal{X})$.  
	Then  (i) $\sum_{i \in \mathcal{N}}\pi_{i}x_{i} \leq \pi_{0}y$ is valid for $\mathcal{X}_y$ (as $(x,1) \in \mathcal{X}_y$ if and only if $x \in \mathcal{X}$ and $\sum_{i \in \mathcal{N}}\pi_{i}x_{i} \leq \pi_{0}y$ holds at $(\boldsymbol{0},0)$); and (ii) there must exist $\left|\mathcal{N}\right|$ affinely independent points  $x^\ell, \ell =1,\cdots,\left|\mathcal{N}\right|$ in $\mathcal{F}$.
	Apparently, $(\boldsymbol{0},0)$ and $({x}^{\ell},1)$, $\ell = 1,\cdots,\left|\mathcal{N}\right|$, are in $\mathcal{F}_y$, which shows that
	$\sum_{i \in \mathcal{N}}\pi_{i}x_{i} \leq \pi_{0}y $ is facet-defining for $\conv(\mathcal{X}_y)$.
\end{proof}

\section*{Acknowledgement}
This work was partially supported by the Chinese NSF grants (Nos. 1210011180, 12171052, 11971073, 11871115, 12021001, 11991021, and 12201620), and
Alibaba Group through Alibaba Innovative Research Program.




\bibliographystyle{./elsarticle-num}
\bibliography{./ecrc-template}






\end{sloppypar}
\end{document}